\documentclass[11pt]{article}
\usepackage{fullpage,amsmath,amsthm,amssymb,bbm}

\newtheorem{theorem}{Theorem}
\newtheorem{lemma}[theorem]{Lemma}

\newtheorem{proposition}[theorem]{Proposition}
\newtheorem{definition}[theorem]{Definition}
\newtheorem{question}[theorem]{Question}
\newtheorem{corollary}[theorem]{Corollary}

\newcommand{\norm}[1]{\left|\left| #1 \right|\right|}

\newcommand{\R}{{\mathbb{R}}}

\newcommand{\C}{{\mathcal{C}}}

\newcommand{\eps}{\varepsilon}

\newcommand{\poly}{\operatorname{poly}}

\newcommand{\ex}[2]{\underset{#1}{\mathbb{E}}\left[ #2 \right]}
\newcommand{\pr}[2]{\underset{#1}{\mathbb{P}}\left[ #2 \right]}

\newcommand{\sgn}{\operatorname{sgn}}
\newcommand{\LOL}{LSL}
\newcommand{\omitted}[1]{}

\title{Weighted Polynomial Approximations:\\Limits for Learning and Pseudorandomness}
\author{Mark Bun\thanks{Harvard University, School of Engineering and Applied Sciences. Supported by an NDSEG Fellowship and NSF grant CNS-1237235.} \and Thomas Steinke\thanks{Harvard University, School of Engineering and Applied Sciences. Supported
by NSF grant CCF-1116616 and the Lord Rutherford Memorial Research Fellowship.}}
\date{\texttt{\{mbun,tsteinke\}@seas.harvard.edu}\\~\\December 8, 2014}

\begin{document}
\maketitle

\begin{abstract}
Polynomial approximations to boolean functions have led to many positive results in computer science. In particular, polynomial approximations to the sign function underly algorithms for agnostically learning halfspaces, as well as pseudorandom generators for halfspaces. In this work, we investigate the limits of these techniques by proving inapproximability results for the sign function.

Firstly, the ``polynomial regression'' algorithm of Kalai et al. (SIAM J. Comput. 2008) shows that halfspaces can be learned with respect to log-concave distributions on $\R^n$ in the challenging agnostic learning model. The power of this algorithm relies on the fact that under log-concave distributions, halfspaces can be approximated arbitrarily
well by low-degree polynomials. We ask whether this technique can be extended beyond log-concave distributions, and establish a negative result. We show that polynomials of any degree cannot approximate the sign function to within arbitrarily low error for a large class of non-log-concave distributions on the real line, including those with densities proportional to $\exp(-|x|^{0.99})$. This impossibility result extends to multivariate distributions, and thus gives a strong limitation on the power of the polynomial regression algorithm for halfspaces.

Secondly, we investigate the derandomization of Chernoff-type concentration inequalities. Chernoff-type tail bounds on sums of independent random variables have pervasive applications in theoretical computer science. Schmidt et al. (SIAM J. Discrete Math.~1995) showed that these inequalities can be established for sums of random variables with only $O(\log(1/\delta))$-wise independence, for a tail probability of $\delta$. We show that their results are tight up to constant factors.

These results rely on techniques from weighted approximation theory, which studies how well functions on the real line can be approximated by polynomials under various distributions. We believe that these techniques will have further applications in other areas of theoretical computer science.
\end{abstract}

\section{Introduction}
Approximation theory is a classical area of mathematics that studies how well functions can be approximated by simpler ones. It has found many applications in computer science. Most of these applications of approximation theory focus on the approximation of functions by polynomials in the uniform norm (or infinity norm). For instance, \emph{approximate degree}, which captures how well a boolean function can be approximated by low-degree polynomials in the uniform norm, underlies important lower bounds in circuit complexity \cite{Beigel93, Beigel94, Sherstov09}, quantum query complexity \cite{BealsBuClMoWo01, AaronsonSh04}, and communication complexity \cite{Sherstov08}. It also underlies state-of-the art algorithms in learning theory \cite{KalaiKlMaSe08, KlivansSe04}, streaming \cite{HarveyNeOn08}, and in spectral methods \cite{SachdevaVi14}.

While it is compelling to study polynomial approximations under the uniform norm, there are scenarios where it is more natural to study \emph{weighted polynomial approximations}, where error is measured in terms of an $L_p$ norm under some distribution. For instance, in agnostic learning, the polynomial regression algorithm of Kalai et al.~\cite{KalaiKlMaSe08} has guarantees based on how well functions in a concept class of interest can be approximated by low-degree polynomials in $L_1$ distance. 

In this work, we show how ideas from weighted approximation theory can yield tight lower bounds for several problems in theoretical computer science. As our first application, we establish a strong limitation on the distributions under which halfspaces can be learned using the polynomial regression algorithm of Kalai et al. Second, in the area of derandomization, we give a tight characterization of the amount of $k$-wise independence necessary to establish Chernoff-like concentration inequalities.

\subsection{Agnostically Learning Halfspaces}

Halfspaces are a fundamental concept class in machine learning, both in theory and in practice.\footnote{A halfspace is a function $f : \mathbb{R}^n \to \{\pm 1\}$ given by $f(x) = \sgn(w \cdot x - \theta)$ for $w \in \mathbb{R}^n$ and $\theta \in \mathbb{R}$, where $\sgn(x) = 1$ if $x \geq 0$ and $\sgn(x)=-1$ otherwise.} Their study dates back to the Perceptron algorithm of the 1950s. Halfspaces serve as building blocks in many applications, including boosting and kernel methods.

Halfspaces can be learned in the PAC model \cite{Valiant84} either by solving a linear program, or via simple iterative update algorithms (e.g. the Perceptron algorithm). However, learning halfspaces with classification noise is a much more difficult problem, and often needs to be dealt with in practice.

In this work, we study a challenging model of \emph{adversarial noise} -- the agnostic learning model of Kearns et al. \cite{KearnsScSeHe94}. In this model, a learner has access to examples drawn from a distribution $\mathcal{D}$ on $X \times \{\pm 1\}$ and must output a hypothesis $h : X \to \{\pm 1\}$ such that $$\pr{(x,y) \sim \mathcal{D}}{h(x) \ne y} \leq \mathrm{opt} + \varepsilon,$$ where $\mathrm{opt}$ is the error of the best concept in the concept class -- that is, $\mathrm{opt} = \min_{f \in \mathcal{C}} \pr{(x,y) \sim \mathcal{D}}{f(x) \ne y}$.

The theory of agnostic learning is not well-understood, even in the case of halfspaces. Positive results for efficient agnostic learning of high-dimensional halfspaces are restricted to limited classes of distributions.\footnote{An efficient algorithm is one which runs in time polynomial in the dimension $n$ for any constant $\varepsilon>0$ -- that is, time $n^{O_\varepsilon(1)}$.}  For instance, halfspaces can be learned under the uniform distribution over the hypercube or the unit sphere, or on any log-concave distribution \cite{KalaiKlMaSe08}. On the negative side, a variety of both computational and information-theoretic hardness results are known. For instance, proper agnostic learning of halfspaces (where the learner is required to output a hypothesis that is itself a halfspace) is known to be NP-hard \cite{FeldmanGoKhPo06}. Moreover, agnostically learning halfspaces under arbitrary distributions is as hard as PAC learning DNFs \cite{LeeBaWi95}, which is a longstanding open problem.

There is essentially only one known technique for agnostically learning high-dimensional halfspaces: the $L_1$ regression algorithm \cite{KalaiKlMaSe08}, which we discuss in more detail in Section \ref{sec:regression}. In its most general form, the algorithm selects a linear space of functions $\mathcal{H} \subset \{ h : X \to \mathbb{R} \}$. After drawing a number of examples $(x_i,y_i)$ from $\mathcal{D}$, it computes $$ h^* = \underset{h \in \mathcal{H}}{\mathrm{argmin}} \sum_i |h(x_i) - y_i|.$$ The output of the algorithm is $\sgn(h^*(x)-t)$ for some $t$. We need to ensure that the minimisation can be computed efficiently (e.g. by linear programming) and that every concept $f \in \mathcal{C}$ can be approximated by some $h \in \mathcal{H}$ -- that is $ \ex{x \sim \mathcal{D}}{|h(x)-f(x)|} \leq \varepsilon$. If this is the case, then $\C$ is agnostically learnable in time $\poly(|\mathcal{H}|)$.

Kalai et al. (and most subsequent work on learning using $L_1$ regression, e.g. \cite{KlivansODSe08, GopalanKaKl08, BlaisODWi10, KaneKlMe13, FeldmanKo14}) chose $\mathcal{H}$ to be the class of low-degree polynomials. They showed that under certain classes of distributions, every halfspace can be approximated by a polynomial of degree $O_\eps(1)$, and hence halfspaces are agnostically learnable in time $n^{O_\eps(1)}$.

Distributional assumptions arise because we use an $L_1$ approximation measure (namely $ \ex{x \sim \mathcal{D}}{|h(x)-f(x)|} $), which depends on the distribution. A distribution-independent approximation would require an $L_\infty$ approximation, which is too much to hope for in many circumstances.


\subsubsection{Our Results}

Can we weaken the distributional assumptions required for learning halfspaces using current techniques?
Our result addressing this question (Theorem \ref{thm:informal-noapprox}) is a negative one. We show that polynomial approximations to halfspaces do not exist for a large class of distributions, namely:

\begin{definition}
An absolutely continuous distribution $\mathcal{D}$ on $\mathbb{R}$ is a \emph{log-superlinear (\LOL) distribution} if there exist $C > 0$ and $\gamma \in (0,1)$ such that the density $w$ of $\mathcal{D}$ satisfies $w(x) \ge C \exp(-|x|^\gamma)$.\footnote{The name log-superlinear comes from the fact that the tails of the probability density function of a \LOL\ distribution are heavier than that of the log-linear Laplace distribution.} 
\end{definition}

\begin{theorem}\label{thm:informal-noapprox}
For any \LOL\ distribution $\mathcal{D}$, there exists $\varepsilon>0$ such that no polynomial (of any degree) can approximate the sign function with $L_1$ error less than $\varepsilon$ with respect to $\mathcal{D}$.
\end{theorem}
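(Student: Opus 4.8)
The plan is to establish a dichotomy: if polynomials could approximate $\sgn$ to arbitrarily small $L_1$ error under a \LOL\ distribution $\mathcal{D}$, I would derive a contradiction with a classical fact from weighted approximation theory about when polynomials are dense in $L_1(\mathcal{D})$. The key point is that $\sgn$ is bounded, so an $L_1(\mathcal{D})$-approximation to it by a polynomial $p$ would, after truncating $p$ at $\pm 1$, still approximate $\sgn$ well; but a truncated polynomial is very close to being a bounded continuous function, and the obstruction lives in the tails. More concretely, I would argue by contradiction: suppose for every $\eps > 0$ there is a polynomial $p_\eps$ with $\ex{x \sim \mathcal{D}}{|p_\eps(x) - \sgn(x)|} < \eps$. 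Since $w(x) \ge C\exp(-|x|^\gamma)$ for all $x$, the weighted $L_1$ mass that $\mathcal{D}$ assigns to any region $|x| \ge R$ is at least $C \int_{|x|\ge R} \exp(-|x|^\gamma)\,dx$, which decays only like $\exp(-R^\gamma)$ — subexponentially.

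The heart of the argument is the following tension. On the one hand, near the origin $\sgn$ has a jump, and any polynomial must ``pay'' for crossing from $-1$ to $+1$; this part is mild and bounded. On the other hand, a polynomial that is close to $+1$ on a large positive interval $[0, R]$ (and to $-1$ on $[-R,0]$) is forced, by its polynomial growth, to blow up outside $[-R,R]$ — and because $\mathcal{D}$ has such heavy (log-superlinear) tails, that blow-up contributes a non-negligible amount to $\ex{x\sim\mathcal{D}}{|p_\eps(x)-\sgn(x)|}$. Quantitatively, I would use a Remez-type or Markov-type inequality: if a degree-$d$ polynomial $p$ satisfies $|p(x)| \le 2$ on $[-R,R]$, then for $|x| = 2R$ (say) we have $|p(x)| \le 2 \cdot T_d(2) $ where $T_d$ is the Chebyshev polynomial, so $|p(x)|$ can be as large as $\exp(\Theta(d))$ — but crucially it cannot be large on a scale that outpaces the tail decay $\exp(-|x|^\gamma)$ for \emph{all} relevant $x$ simultaneously. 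The clean way to package this is to note that for the approximation error to be $<\eps$, the polynomial $p_\eps$ must be within, say, $1/2$ of $\sgn$ except on a set of $\mathcal{D}$-measure $O(\eps)$; then use the structure of \LOL\ tails to show that on a suitably chosen annulus $R \le |x| \le 2R$ the polynomial is both forced to be near $\pm 1$ (hence near a sign flip it is not, by a counting/degree argument) \emph{and} forced by its degree to be wildly oscillating — a contradiction once $R = R(d)$ is chosen large enough relative to $d$.

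I expect the cleanest route to avoid juggling $d$ and $R$ by hand is to invoke a known density criterion. There is a classical theorem (in the spirit of results of Akhiezer, Bernstein, and the Bernstein weighted-approximation problem) characterizing the weights $w$ for which polynomials are dense in the weighted space: density fails precisely when the tails of $w$ are too heavy — for instance polynomials are \emph{not} dense in $L_1$ under a weight $\ge c\exp(-|x|^\gamma)$ with $\gamma<1$. So the plan is: (1) reduce ``approximating $\sgn$ arbitrarily well'' to a density-type statement in $L_1(\mathcal{D})$; (2) observe that approximating $\sgn$ well lets us approximate \emph{any} bounded step function, and then (by linear combinations and a standard limiting argument) a rich enough family of bounded functions, well enough to contradict the failure of density; (3) invoke or reprove the weighted-approximation obstruction for \LOL\ tails to finish. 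The main obstacle, and where I would spend the most care, is step (3): making the tail obstruction fully rigorous and self-contained, i.e. exhibiting an explicit ``dual'' obstruction — a signed measure or a function in $L_\infty(\mathcal{D})$ orthogonal to all polynomials but not to $\sgn$ — built from the subexponential tail of $w$. This is precisely the classical hard direction of the Bernstein problem, and constructing the witnessing functional (e.g. via an entire function of exponential type zero that decays slower than $w$, à la the Denjoy–Carleman / quasi-analyticity machinery) is the crux of the whole proof.
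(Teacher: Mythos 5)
The central gap is in your step (2): the failure of polynomial density in $C_0[w]$ (or in weighted $L_1$) only guarantees that \emph{some} function cannot be approximated, and your bridge from ``$\sgn$ is approximable under $\mathcal{D}$'' to ``polynomials are dense under $\mathcal{D}$'' does not go through. To approximate a general step function such as $\mathbbm{1}_{[a,b]} = (\sgn(x-a)-\sgn(x-b))/2$ you would need to approximate the shifted functions $\sgn(x-\theta)$ under the \emph{same} distribution $\mathcal{D}$, which is equivalent to approximating $\sgn$ under the translated weight $w(\cdot+\theta)$; the hypothesis you are contradicting only gives approximability of $\sgn$ under the one fixed weight, so the linear-combination/limiting argument never gets started. (The paper makes precisely this distinction: it presents the Carleson/Izumi--Kawata non-density criterion for $\gamma<1$ only as intuition, and says one can merely \emph{conjecture} from it that thresholds are non-approximable.) Your fallback, step (3) --- a dual witness in $L_\infty(\mathcal{D})$ orthogonal to all polynomials but not to $\sgn$ --- would indeed suffice, but by Hahn--Banach its existence is \emph{equivalent} to the theorem, and you defer its construction to classical quasi-analyticity machinery; that construction (including boundedness of the witness relative to $\mathcal{D}$ and nonvanishing pairing with $\sgn$ specifically) is the entire content of the statement, so as written this is a plan rather than a proof. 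The heuristic middle paragraph (tail blow-up on annuli, a polynomial ``forced to be wildly oscillating'') is likewise never made quantitative, and it is not clear it can be: a good approximant need not be large or oscillatory anywhere in particular unless one invokes exactly the kind of weighted inequality you have not stated.

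The paper's proof is much more local and avoids density altogether. It first reduces to the canonical weight $w_\gamma(x)=C(\gamma)e^{-|x|^\gamma}$ by pointwise domination: since $w\ge Ce^{-|x|^\gamma}$, the $L_1$ error under $\mathcal{D}$ dominates a constant times the error under $w_\gamma$. It then combines two \emph{degree-independent} weighted inequalities from Nevai's survey --- a Bernstein/Markov-type bound $\int_\R|p'|w_\gamma \le C_1(\gamma)\int_\R|p|w_\gamma$ and a Nikolskii-type bound $\sup_x|p(x)|w_\gamma(x)\le C_2(\gamma)\int_\R|p|w_\gamma$ --- to conclude that any polynomial with $\int_\R|p|w_\gamma\le 2$ satisfies $|p'(x)|w_\gamma(x)\le M(\gamma)$ for every $x$, regardless of degree. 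But an $\eps$-approximation of $\sgn$ must rise from below $-1/2$ to above $1/2$ within an interval of length $O(\eps)$ around the origin, so the mean value theorem produces a point where the weighted derivative is $\Omega(1/\eps)$ while $w_\gamma=\Theta(1)$, a contradiction once $\eps$ is small. If you want to pursue your route, carrying out the dual-witness construction for $\sgn$ is what remains; the degree-independent Markov/Nikolskii inequalities are the ready-made substitute the paper uses, and they make the argument three lines rather than a detour through the Bernstein problem.
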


In particular, this implies that the polynomial regression algorithm is not able to agnostically learn thresholds on the real line to within arbitrarily small error. Note that this result does not rule out the possibility that halfspaces can be agnostically learned by other techniques. Indeed, the classic approach of empirical risk minimization (see \cite{KearnsScSeHe94} and the references therein) gives an efficient algorithm for learning thresholds (which are halfspaces in one dimension) under arbitrary distributions. Thus the problem of learning real thresholds under \LOL\ distributions is an explicit example for which polynomial regression fails while other techniques can succeed.

If we were to take $\gamma \geq 1$, the probability density function $C(\gamma) e^{-|x|^\gamma}$ (where $C(\gamma)$ is a normalising constant) would give a log-concave distribution, in which case Kalai et al.~\cite{KalaiKlMaSe08} show that good polynomial approximations to halfspaces exist. Thus our result gives a threshold between where polynomial approximations to halfspaces exist and where they do not.

Our result for thresholds extends readily to an impossibility result for learning halfspaces over $\R^n$:

\begin{theorem}
For any product distribution $\mathcal{D}$ on $\mathbb{R}^n$ with a \LOL\ marginal distribution on some coordinate, there exists $\varepsilon>0$ and a halfspace $h$ such that no polynomial can approximate $h$ with $L_1$ error less than $\varepsilon$ with respect to $\mathcal{D}$.
\end{theorem}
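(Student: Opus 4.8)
The plan is to reduce this multivariate claim to the one-dimensional Theorem~\ref{thm:informal-noapprox} by a restriction argument, using only that $\mathcal{D}$ is a product distribution. Fix a coordinate $i$ on which the marginal $\mathcal{D}_i$ of $\mathcal{D}$ is \LOL, and let $\varepsilon_0 > 0$ be the error threshold that Theorem~\ref{thm:informal-noapprox} provides for the univariate distribution $\mathcal{D}_i$. I would take the halfspace to be $h(x) = \sgn(x_i)$ (i.e.\ $w = e_i$, $\theta = 0$ in the notation of the footnote), and claim the theorem holds with $\varepsilon = \varepsilon_0$.

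Suppose toward a contradiction that some polynomial $p : \R^n \to \R$ satisfies $\ex{x \sim \mathcal{D}}{|p(x) - h(x)|} < \varepsilon_0$. Writing $x = (x_i, x_{-i})$, independence of the coordinates under the product distribution $\mathcal{D}$ together with Tonelli's theorem (the integrand is nonnegative) gives
\[
\ex{x_{-i} \sim \mathcal{D}_{-i}}{\ \ex{x_i \sim \mathcal{D}_i}{|p(x_i, x_{-i}) - \sgn(x_i)|}\ } \;=\; \ex{x \sim \mathcal{D}}{|p(x) - h(x)|} \;<\; \varepsilon_0 .
\]
Since the outer expectation of a nonnegative quantity is $< \varepsilon_0 < \infty$, there is a positive-measure set of points $a \in \R^{n-1}$ --- in particular at least one --- for which $\ex{x_i \sim \mathcal{D}_i}{|p(x_i, a) - \sgn(x_i)|} < \varepsilon_0$. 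But $x_i \mapsto p(x_i, a)$ is a univariate polynomial (of degree at most $\deg p$), so this contradicts Theorem~\ref{thm:informal-noapprox} applied to the \LOL\ distribution $\mathcal{D}_i$, and hence no such $p$ can exist.

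I expect essentially no obstacle here: all of the difficulty lives in the one-dimensional Theorem~\ref{thm:informal-noapprox}, and the reduction above is barely more than a line. The one subtlety worth recording is why we restrict $p$ to a generic line parallel to the $x_i$-axis rather than forming the conditional expectation $\ex{x_{-i}}{p(x_1,\dots,x_n)}$: the latter need not be a well-defined polynomial in $x_i$, since the marginals $\mathcal{D}_j$ for $j \neq i$ are arbitrary and may fail to have finite moments, whereas each restriction $p(\cdot, a)$ is automatically a genuine polynomial. One should also note in passing that $p \in L_1(\mathcal{D})$ --- which follows since $\ex{x \sim \mathcal{D}}{|p(x)|} \le \varepsilon_0 + 1$ --- so that the iterated integral above and the choice of $a$ are legitimate.
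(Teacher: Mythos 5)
Your proposal is correct, and it reaches the same conclusion as the paper via a slightly different reduction. The paper constructs a single univariate polynomial by \emph{averaging out} the other coordinates, $q(x_1) := \int p(x_1,x_2,\dots,x_n)\,f(x_2,\dots,x_n)\,dx_2\cdots dx_n$, and then bounds $\int |q(x_1)-\sgn(x_1)|\,w(x_1)\,dx_1$ by the multivariate error using the triangle (Jensen) inequality under the integral, so that Proposition~\ref{prop:noapproxforsign} applies to $q$. You instead keep $p$ and use Tonelli to select a single fiber $a$ on which the restricted univariate polynomial $p(\cdot,a)$ already beats the one-dimensional bound, which is a contradiction since the $\eps$ of Theorem~\ref{thm:informal-noapprox} is uniform over polynomials of every degree. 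The trade-off you note is real: the averaging route implicitly needs $q$ to be a well-defined polynomial, i.e.\ the coefficient functions of $p$ in $x_1$ must be integrable against the other marginals (this can be rescued by observing that finiteness of the error integral forces integrability of $p(x_1,\cdot)$ for enough values of $x_1$, whence each coefficient is integrable by interpolation, but the paper does not spell this out), whereas your fiber-restriction argument sidesteps any moment or integrability assumptions on the remaining marginals entirely, at the harmless cost of being non-constructive about which fiber works. Both arguments use only independence of the \LOL\ coordinate from the rest and yield the same $\eps$, namely the one from the univariate result.
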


Our result echoes prior work establishing the limits of \emph{uniform} polynomial approximations for various concept classes. For instance, the seminal work of Minsky and Papert \cite{MinskyPapert} showed that there is an \emph{intersection} of two halfpsaces over $\mathbb{R}^n$ which cannot  be represented as the sign of any polynomial. Building on work of Nisan and Szegedy \cite{NisanSz94}, Paturi \cite{Paturi92} gave tight lower bounds for uniform approximations to symmetric boolean functions. This, and subsequent work on lower bounds for approximate degree, immediately imply limitations for distribution-independent agnostic learning via polynomial regression. Klivans and Sherstov \cite{KlivansSh10} also showed a strong generalization of Paturi's result to disjunctions, giving limitations on how well they can be approximated by linear combinations of arbitrary features. By contrast to all of these results, our work shows a strong limitation for certain \emph{distribution-dependent} polynomial approximations.

In the distribution-dependent setting, Feldman and Kothari \cite{FeldmanKo14} showed that polynomial regression cannot be used to learn disjunctions with respect to symmetric distributions on the hypercube. Recent work of Daniely et al. \cite{DanielyLiSh14} also uses ideas from approximation theory to show limitations on broad class of regression and kernel-based methods for learning halfspaces, even under a margin assumption. While our results only apply to polynomial regression, they hold for approximations of arbitrarily high complexity (i.e. degree), and for a large class of natural distributions.

The proof of Theorem \ref{thm:informal-noapprox} relies on several Markov-type inequalities for weighted polynomial approximations. These are generalizations of the classical Markov inequality for uniform approximations, which gives a bound on the derivative of a low-degree polynomial that is bounded on the unit interval:
\begin{theorem}[\cite{Markov90}]
Let $p$ be a polynomial of degree $d$ with $|p(x)| \le 1$ on the interval $[-1, 1]$. Then $|p'(x)| \le d^2$ on $[-1, 1]$.
\end{theorem}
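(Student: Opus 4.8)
The plan is to identify the Chebyshev polynomial $T_d(x) = \cos(d\arccos x)$ as the extremal example and to show every competitor $p$ obeys the same derivative bound it does, since a short computation gives $|T_d'(x)| \le d^2$ on $[-1,1]$ with equality only at $x=\pm 1$. I would split $[-1,1]$ into an ``interior'' regime, handled by the Bernstein inequality, and two endpoint neighbourhoods, which are the real difficulty.

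For the interior I would reduce to the trigonometric Bernstein inequality via the substitution $x = \cos\theta$: if $|p(x)|\le 1$ on $[-1,1]$ then $g(\theta):=p(\cos\theta)$ is a trigonometric polynomial of degree $d$ with $|g|\le 1$, and since $g'(\theta) = -\sin\theta\cdot p'(\cos\theta)$, the classical bound $|g'(\theta)|\le d$ (itself provable from M.~Riesz's interpolation formula or a direct equioscillation argument) yields $|p'(x)| \le d/\sqrt{1-x^2}$. This is already at most $d^2$ whenever $1-x^2 \ge 1/d^2$, so only the neighbourhoods $|x| > \sqrt{1-1/d^2}$ of $\pm 1$ remain.

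The crux is the endpoint regime, where the Bernstein bound blows up and a finer argument is needed. Here I would Lagrange-interpolate $p'$ — which has degree $\le d-1$ — at the $d$ zeros $\eta_1 > \cdots > \eta_d$ of $T_d$, obtaining $p'(x) = T_d(x)\sum_{k=1}^d \frac{p'(\eta_k)}{T_d'(\eta_k)(x-\eta_k)}$. The key observations are that $|T_d'(\eta_k)|$ equals exactly the Bernstein bound $d/\sqrt{1-\eta_k^2}$ at $\eta_k$, so $|p'(\eta_k)/T_d'(\eta_k)| \le 1$, and that for $x_0$ lying to the right of every $\eta_k$ (i.e. $x_0 \in [\cos(\pi/2d),1]$) the quantities $T_d(x_0)$ and $x_0-\eta_k$ are all positive; hence the triangle inequality collapses the sum via the logarithmic-derivative identity $T_d'/T_d = \sum_k 1/(x-\eta_k)$ to give $|p'(x_0)| \le T_d'(x_0)$. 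Finally $T_d'$ is increasing on $[\cos(\pi/2d),1]$ (this follows from the superadditivity of $\tan$ on $[0,\pi/2)$), so $T_d'(x_0) \le T_d'(1) = d^2$, and the mirror-image argument covers the left endpoint.

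It then remains only to check that the interior and endpoint regimes together cover $[-1,1]$, i.e. that $\cos(\pi/2d) \le \sqrt{1-1/d^2}$, equivalently $\sin(\pi/2d) \ge 1/d$, which is Jordan's inequality. I expect the main obstacle to be precisely this endpoint analysis: Bernstein's inequality is genuinely too weak near $\pm 1$, so one cannot bypass the interpolation step, and it rests on two slightly delicate facts about $T_d$ — the exact matching of $|T_d'(\eta_k)|$ with the Bernstein bound, and the monotonicity of $T_d'$ on $[\cos(\pi/2d),1]$. A clean proof of the trigonometric Bernstein inequality is the other nontrivial ingredient.
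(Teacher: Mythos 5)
The paper does not prove this theorem at all: it is quoted as the classical Markov brothers' inequality, cited to Markov (1890) (and later invoked again via \cite{Markov90, Cheney82} inside the proof of Lemma \ref{lem:Continous}), so there is no in-paper argument to compare against. Your proposal is a correct reconstruction of the standard classical proof, and all the load-bearing steps check out: Bernstein's inequality $|p'(x)|\le d/\sqrt{1-x^2}$ handles $1-x^2\ge 1/d^2$; for $x_0\in[\cos(\pi/2d),1]$ the Lagrange interpolation of $p'$ (degree $\le d-1$) at the $d$ Chebyshev zeros is exact, $|T_d'(\eta_k)|=d/\sqrt{1-\eta_k^2}$ matches the Bernstein bound so each coefficient $|p'(\eta_k)/T_d'(\eta_k)|\le 1$, positivity of $T_d(x_0)$ and of $x_0-\eta_k$ lets the triangle inequality collapse the sum to $T_d'(x_0)$ via the logarithmic derivative, and $\tan(d\theta)\ge d\tan\theta$ does give monotonicity of $T_d'$ up to $T_d'(1)=d^2$; finally Jordan's inequality $\sin(\pi/2d)\ge 1/d$ is exactly what makes the two regimes cover $[-1,1]$. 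Two minor points to tidy in a full write-up: at $x_0=\eta_1$ the interpolation bound is degenerate ($T_d(x_0)=0=x_0-\eta_1$), but that point lies in the Bernstein regime by the same Jordan inequality (or handle it by continuity); and the trigonometric Bernstein inequality $|g'|\le d$ is deferred, which is reasonable since it is classical, but it is a genuine ingredient you would need to either prove (e.g. via the Riesz interpolation formula, as you indicate) or cite. In short, where the paper treats the inequality as a black box from the approximation-theory literature, your outline supplies a self-contained (modulo trigonometric Bernstein) and correct proof of it.
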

Early work on the approximate degree of boolean functions \cite{NisanSz94, Paturi92} used Markov's inequality to get tight lower bounds on the degree of uniform approximations to symmetric functions. For weighted approximations under \LOL\ distributions, we actually get a much stronger statement. It turns out that under \LOL\ distributions, the derivative of a bounded polynomial near the origin is at most a constant \emph{independent of degree}. With this powerful fact in hand, the proof of Theorem \ref{thm:informal-noapprox} is quite simple. Consider the threshold function $f(t) = \sgn(t)$. Since $f$ has a ``jump'' at zero, any good polynomial approximation to $f$ must be bounded and have a large derivative near zero. The higher quality the approximation, the larger a derivative we need. But since the derivative of any polynomial is bounded by a constant, we cannot get arbitrarily good approximations to $f$ using polynomials.

We give the full proof in Section \ref{sec:lowerbound}, and discuss the multivariate generalization in Section \ref{sec:multi-lowerbound}.

\subsubsection{Related Work}




There is a rich literature on lower bounds for agnostic learning. In the case of \emph{proper} agnostic learning Feldman et. al \cite{FeldmanGoKhPo06} gave an optimal NP-hardness result for even weakly agnostically learn halfspaces over $\mathbb{Q}^n$. Guruswami and Raghavendra \cite{GuruswamiRa06} showed that the same is true even for halfspaces on the boolean hypercube. 

There has also been a line of work giving representation-independent hardness of learning halfspaces based on cryptographic assumptions. Feldman et. al \cite{FeldmanGoKhPo06} and Klivans and Sherstov \cite{KlivansSh09} showed that, assuming the security of certain public key encryptions schemes, it is hard to even PAC learn thresholds and intersections of halfspaces, respectively. These results imply that it is hard to agnostically learn a single halfspace in the harsh noise regime, i.e. when $\mathsf{opt}$ is very close to $\frac{1}{2}$. Shalev-Schwartz et al. \cite{ShalevShSr11} further showed that halfspaces cannot be efficiently learned even under a large margin assumptions.

There has also been extensive work proving unconditional lower bounds for restricted learning algorithms. One well-studied restriction on a learner is that it operates in Kearns' statistical query (SQ) model \cite{Kearns98, BlumFuJaKeMaRu94, KlivansSh07, FeldmanLeSe11}. This model captures $L_1$ regression, as well as essentially every technique known for learning (besides Gaussian elimination). Very recently, Dachman-Soled et al. \cite{DachmanFeTaWaWi14} showed that polynomial regression is in fact essentially the optimal SQ algorithm for agnostic learning with respect to product distributions on the hypercube.

The limitations we prove for polynomial regression do not rule out the existence of other agnostic learning algorithms, including those using $L_1$ regression with different feature spaces. Wimmer \cite{Wimmer10} showed how to use a different family of basis functions to learn halfspaces over symmetric distributions on the hypercube. Subsequent work of Feldman and Kothari \cite{FeldmanKo14} improved the running time in the special case of disjunctions. We leave it as an intriguing open question to determine whether other basis functions can be used to learn halfspaces under \LOL\ distributions.


\subsection{Tail Bounds for Limited Independence}

The famous Hoeffding bound \cite{Hoeffding} implies that if $X \in \{\pm 1\}^n$ is a uniform random variable and $r \in \mathbb{R}^n$ is fixed, then, for all $T \geq 0$, $$\pr{X}{|X \cdot r| \geq T} \leq 2 e^{-\frac{T^2}{2\norm{r}_2^2}}.$$
We ask the following question: 
\begin{center}
For what \emph{pseudorandom} $X$ is the Hoeffding bound true?
\end{center}
More precisely, given $T$ and $\delta$, can we construct a pseudorandom $X$ such that $\pr{X}{|X \cdot r| \geq T} \leq \delta$ for all $r \in \{\pm 1\}^n$?\footnote{For simplicity we restrict our attention to $r \in \{\pm 1\}^n$.} Of particular interest is the parameter regime $\delta = 1/\poly(n)$ and $T = \Theta(\norm{r}_2\sqrt{\log(1/\delta)})$. The probabilistic method gives a non-constructive proof that there exists such an $X$ which can be sampled with seed length $O(\log(n/\delta))$. The challenge is to give an explicit construction of such an $X$ which can be \emph{efficiently} sampled with a short seed.

This is a very natural pseudorandomness question: Concentration of measure is a fundamental property of independent random variables and one of the key objectives of pseudorandomness research is to replicate such properties for variables with low entropy. Finding a pseudorandom $X$ exhibiting good concentration is also a relaxation of a more general and well-studied pseudorandomness question, namely constructing pseudorandom generators that fool linear threshold functions \cite{DGJSV09,MekaZuckerman,GOWZ,DSTW}. This can also be viewed as a special case of constructing pseudorandom generators for space-bounded computation \cite{Nisan,INW,Reingold,BRRY,BrodyVerbin,KNP,RSV13}.

For $\delta = 1/\poly(n)$ and $T = \Theta(\sqrt{n \log(1/\delta)})$, we can construct generators $X$ with seed length $O(\log^2 n)$ using a variety of methods (including \cite{Nisan,MekaZuckerman}). In particular, it suffices for $X$ to be $O(\log(1/\delta))$-wise independent:

\begin{theorem}[Tail Bound for Limited Independence] \label{thm:KwiseUpperBound}
Let $n \geq 1$, $\eta > 0$, and $\delta \in (0,1)$ be given. Let $X \in \{\pm 1\}^n$ be $k$-wise independent for $k = 2 \lceil  \eta \log_e(1/\delta) \rceil$. Set $T = e^{(\eta+1)/2\eta}\sqrt{k} \norm{r}_2$. Then, for all $r \in \mathbb{R}^n$, $$\pr{}{|X \cdot r| \geq T} \leq \delta.$$
\end{theorem}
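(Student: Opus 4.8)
The plan is to use the $k$th-moment method. Because $k = 2\lceil \eta \log_e(1/\delta)\rceil$ is even, $(X \cdot r)^k$ is a nonnegative random variable, so Markov's inequality gives
$$\pr{}{|X \cdot r| \geq T} \;=\; \pr{}{(X \cdot r)^k \geq T^k} \;\leq\; \frac{\E\left[(X \cdot r)^k\right]}{T^k},$$
and the whole argument reduces to bounding the $k$th moment of $X \cdot r$ and checking that the prescribed $T$ pushes the right-hand side below $\delta$. (If $r = 0$ the statement is degenerate, so we assume $r \neq 0$, hence $T > 0$.)

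First I would use $k$-wise independence to replace $X$ by a fully independent uniform $Y \in \{\pm 1\}^n$. Expanding $(X \cdot r)^k = \left(\sum_i r_i X_i\right)^k$ into monomials, every monomial has degree at most $k$ and therefore involves at most $k$ of the coordinates $X_i$, so its expectation is unchanged by passing from the $k$-wise independent $X$ to a fully independent $Y$. Hence $\E[(X \cdot r)^k] = \E[(Y \cdot r)^k]$ exactly. Next I would bound the $k$th moment of the Rademacher sum $Y \cdot r$ by the standard estimate $\E[(Y \cdot r)^k] \leq (k-1)!!\,\norm{r}_2^k$ (which follows, e.g., by comparing the multinomial expansion term-by-term with that of a Gaussian of the same variance $\norm{r}_2^2$, or from $(2,k)$-hypercontractivity / Khintchine's inequality), and then use the crude simplification $(k-1)!! \leq k^{k/2}$.

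Combining these with $T = e^{(\eta+1)/(2\eta)}\sqrt{k}\,\norm{r}_2$, the factors $k^{k/2}\norm{r}_2^k$ cancel and we obtain
$$\pr{}{|X \cdot r| \geq T} \;\leq\; \frac{k^{k/2}\norm{r}_2^k}{e^{(\eta+1)k/(2\eta)}\,k^{k/2}\norm{r}_2^k} \;=\; e^{-(\eta+1)k/(2\eta)}.$$
Finally, since $k \geq 2\eta\log_e(1/\delta)$, the exponent is at most $-(\eta+1)\log_e(1/\delta)$, so the bound is at most $\delta^{\eta+1} \leq \delta$, using $\delta \in (0,1)$ and $\eta > 0$.

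I do not expect a genuine obstacle here; the argument is a routine moment computation and the only thing to get right is that the moment bound matches the prescribed value of $T$. The constant $e^{(\eta+1)/(2\eta)}$ is exactly what is needed for the last step to close, and in fact the estimate is slightly lossy --- it yields $\delta^{\eta+1}$ (even $e^{-1/2}\delta^{\eta+1}$ if one keeps $(k-1)^{k/2}$ rather than $k^{k/2}$) --- which indicates that the theorem is phrased to give a clean closed form for $T$ rather than to optimise constants. The quantitatively meaningful feature is that the level of independence $k$ relative to $\log(1/\delta)$, encoded by the free parameter $\eta$, can be tuned; this is what one would exploit when matching the bound against a corresponding lower bound.
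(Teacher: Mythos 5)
Your proposal is correct and follows essentially the same route as the paper: use $k$-wise independence to equate $\E[(X\cdot r)^k]$ with the fully independent case, apply Markov's inequality to the even $k$th moment, and check that the prescribed $T$ makes the bound at most $\delta$. The only difference is in the moment estimate itself --- you use the Gaussian-comparison/Khintchine bound $(k-1)!!\,\norm{r}_2^k \le k^{k/2}\norm{r}_2^k$ (which the paper also notes follows from hypercontractivity), whereas the paper's self-contained argument extracts the bound $(e\norm{r}_2^2 k)^{k/2}$ from the moment generating function; your slightly sharper constant is why you end up with $\delta^{\eta+1}$ rather than exactly $\delta$, and both close the proof.
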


A $k$-wise independent $X \in \{\pm 1\}^n$ can be sampled with seed length $O(k \cdot \log n)$ \cite{ABI85}. Another construction which achieves seed length $O(\log n \cdot \log(1/\delta))$ is to sample $X$ from a small-bias space \cite{NaorN93}.  Very recently, Gopalan et al. \cite{GopalanKM14} constructed a new generator with seed length $\tilde{O}(\log(n/\delta))$, which is nearly optimal.

In this work, we ask whether the tail bound of Theorem \ref{thm:KwiseUpperBound} for $k$-wise independence is tight. That is, can we prove stronger tail bounds for $k$-wise independent $X$? 

\begin{question} \label{q:Kwise}
How much independence is needed for $X$ to satisfy a Hoeffding-like tail bound? That is, what is the minimum $k = k(n, \delta, T)$ for which any $k$-wise independent $X \in \{\pm 1\}^n$ satisfies
$$\pr{X}{|X \cdot r| \geq T} \leq \delta$$
for all $r \in \{-1, 1\}^n$, where $\cdot$ denotes the inner product.
\end{question}

\subsubsection{Our Results}

Theorem \ref{thm:KwiseUpperBound} shows that $k(n,\delta,T) \leq O(\log(1/\delta))$ for $T = O(\sqrt{n\log(1/\delta)})$. In this work, we show that this is essentially tight:

\begin{theorem} \label{thm:KwiseLowerBound}
For $T= c \sqrt{n \log(1/\delta)}$ ($c > 5$), we have $k(n,\delta,T) = \Omega(\log_c(1/\delta))$ for sufficiently large $n$.
\end{theorem}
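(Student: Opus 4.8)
The plan is to pass from the combinatorial question to a one-dimensional polynomial approximation problem via linear programming duality, and then to prove the resulting polynomial estimate using Markov's inequality together with the extremal growth of Chebyshev polynomials.

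First I would set up the reduction. Take $r=(1,\dots,1)$, so that $X\cdot r=\sum_i X_i$; it suffices to produce, for each $k$ below the claimed threshold, a $k$-wise independent $X$ with $\pr{}{|\sum_i X_i|\ge T}>\delta$. Since the target event is symmetric, we may restrict to distributions $X$ invariant under coordinate permutations, and such an $X$ is determined by the law $\nu$ of $W:=\#\{i:X_i=1\}$. Writing $\E[X_1\cdots X_j\mid W]$ as a degree-$j$ polynomial in $W$ (a Krawtchouk polynomial) shows that a symmetric $X$ is $k$-wise independent exactly when $\nu$ matches the first $k$ moments of $B:=\mathrm{Bin}(n,1/2)$. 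Maximizing $\pr{\nu}{|2W-n|\ge T}$ over all such $\nu$ is a finite, feasible, bounded linear program, so by strong LP duality its optimum equals
\[
\min\Big\{\, \ex{W\sim B}{P(W)} \;:\; \deg P\le k,\; P(w)\ge \mathbbm{1}[\,|2w-n|\ge T\,]\ \text{for every } w\in\{0,\dots,n\} \,\Big\}.
\]
So it is enough to show that, once $k$ is small, \emph{every} polynomial $P$ of degree at most $k$ dominating the tail indicator on the support has $\ex{W\sim B}{P(W)}>\delta$: this pushes the LP optimum above $\delta$ and hence produces the desired $k$-wise independent $X$. In the centered variable $y=2w-n$, such a $P$ is nonnegative on the support $G$ of $S_n=\sum_i X_i$ and satisfies $P(T')\ge 1$ for the smallest point $T'\in G$ with $T'\ge T$.

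Second, I would prove this polynomial lower bound. Choose the scale $L:=T/c=\sqrt{n\log(1/\delta)}$, so that $T'/L\le 2c$ for $n$ large. By the extremal growth property of Chebyshev polynomials, $1\le |P(T')|\le M\cdot T_k(T'/L)$ where $M:=\norm{P}_{L_\infty[-L,L]}$, and since $T_k(x)\le (2x)^k$ for $x\ge 1$ this gives $M\ge (4c)^{-k}$. Markov's inequality, rescaled to $[-L,L]$, bounds $|P'|$ there by $Mk^2/L$; since $P$ is moreover essentially nonnegative on all of $[-L,L]$ (it is nonnegative on the integer grid and $k^2\ll L$ for $n$ large), it follows that $P\ge M/2$ on a subinterval $J\subseteq[-L,L]$ of length $\gtrsim L/k^2$. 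As $P\ge 0$ everywhere on $G$,
\[
\ex{W\sim B}{P(W)} \;\ge\; \sum_{y\in G\cap J} B(y)\,P(y) \;\gtrsim\; M\cdot\frac{L}{k^2}\cdot\min_{|y|\le L}B(y) \;\gtrsim\; (4c)^{-k}\cdot\frac{L}{k^2}\cdot\frac{1}{\sqrt n}\,e^{-L^2/2n},
\]
using $|G\cap J|\gtrsim L/k^2$ and the local limit estimate $B(y)\gtrsim \tfrac{1}{\sqrt n}e^{-y^2/2n}$ valid for $|y|\le L\ll n$. Since $e^{-L^2/2n}=\sqrt\delta$ and $L/\sqrt n=\sqrt{\log(1/\delta)}$, this yields $\ex{W\sim B}{P(W)}\gtrsim (4c)^{-k}\,\sqrt{\log(1/\delta)}\,\sqrt\delta\,/k^2$.

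Finally, this last quantity exceeds $\delta$ whenever $k\log(4c)+O(\log k)\le \tfrac12\log(1/\delta)-O(\log\log(1/\delta))$, which holds for all $k\le c'\log(1/\delta)/\log c=\Omega(\log_c(1/\delta))$ provided $c>5$ (the slack in the constant $5$ absorbs the lower-order terms and the implicit constants) and $n$ is sufficiently large. For such $k$ the LP optimum is larger than $\delta$, so some $k$-wise independent $X$ violates the Hoeffding-type bound; hence $k(n,\delta,T)=\Omega(\log_c(1/\delta))$. I expect the main obstacle to be the quantitative balancing in the second step: the scale $L$ must be small enough that the binomial pmf is not too small on $[-L,L]$ yet large enough that the Chebyshev factor $T_k(T'/L)$ is not too big, and it is precisely the relation $L\asymp T/c$ together with $T\asymp\sqrt{n\log(1/\delta)}$ that makes both happen; one also has to check that the various error terms, the local-limit approximation, and the passage from the integer grid to the continuum are all controlled for $n$ large.
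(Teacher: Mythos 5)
Your proposal is correct in substance, and after the shared first step it takes a genuinely different route from the paper's. The reduction via LP duality and symmetrization is the same as Theorem~\ref{thm:discrete-char} (the paper dualizes over all $k$-wise uniform distributions and then applies Minsky--Papert symmetrization to the dual polynomial; you symmetrize the primal first via Krawtchouk polynomials --- these are equivalent), and in both arguments strong duality is what turns a universal lower bound on sandwiching polynomials into the existence of a bad $k$-wise independent distribution. The analytic core differs: the paper passes to a continuous Gaussian-weighted problem (Theorem~\ref{thm:approximation}, via Lemma~\ref{lem:Continous} and Lemma~\ref{lem:BinomialBound}) and then contradicts the jump at $T'$ with the infinite--finite range inequality (Theorem~\ref{thm:homegrown-infinite-finite}); you stay on the binomial grid, use Chebyshev growth to force $M:=\max_{x\in[-L,L]}|P(x)|\ge(4c)^{-k}$ with $L=T/c$, use Markov's inequality together with nonnegativity of $P$ on the grid to find a subinterval of length $\gtrsim L/k^2$ on which $P\ge M/2$, and lower-bound the binomial expectation of $P$ directly by the mass of that subinterval. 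The ingredients (Markov, Chebyshev growth) are the same ones inside the paper's lemmas, but your assembly avoids the discrete-to-continuous transfer and exploits the constraint $P\ge 0$ on the whole support, which the paper never uses; a pleasant side effect is that the $\sqrt{n}$ factors cancel in your final estimate, whereas the paper's concluding contradiction additionally needs $(n+1)\delta^{1/3}<1/\sqrt{\pi}$. Three details to make explicit: (i) the pointwise bound $2^{-n}\binom{n}{(n+y)/2}\gtrsim e^{-y^2/2n}/\sqrt{n}$ requires $y^4/n^3=O(1)$, i.e.\ $n\gtrsim\log^2(1/\delta)$, not merely $|y|\ll n$ (this is implied by the same largeness assumption on $n$ that the paper imposes); (ii) ``essentially nonnegative'' should be replaced by the observation that if $\max_{[-L,L]}|P|$ were attained where $P<0$, then a grid point within distance $2$ at which $P\ge 0$, combined with the Markov bound $|P'|\le Mk^2/L$, would force $L\le O(k^2)$, contradicting $n$ large; (iii) the final comparison $(4c)^{-k}\sqrt{\delta\log(1/\delta)}/k^2>\delta$ needs $\delta$ small enough to absorb the $\log k$ and constant terms, which is the regime of interest (for constant $\delta$ the claimed $\Omega(1)$ bound is trivial). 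With these spelled out, your argument gives $k(n,\delta,T)\gtrsim\log(1/\delta)/\log c$, matching Theorem~\ref{thm:deg-lb} up to the constant factor.
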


The only previous lower bound was $$k(n,\delta,T)) \geq \Omega \left( \frac{\log(1/\delta)}{\log n} \right),$$ which holds for any $T \leq n$ and is due to \cite{SSS}. This is useful if $\delta < n^{-\omega(1)}$, but the lower bound is constant in our parameter regime. This lower bound follows from the fact that a random variable $X$ with support size $s$ cannot give a tail bound with $\delta<1/s$, and that there exist $k$-wise independent distributions with support size $s \leq O(n^k)$.

\medskip

The most natural way to prove Theorem \ref{thm:KwiseLowerBound} would be to construct a family of $k$-wise independent distributions that do not satisfy the required tail bound.  However, we instead study the \emph{dual} formulation of the problem (following \cite{Bazzi,DETT}) and then use lower bound techniques from approximation theory. To the best of our knowledge, this indirect approach is novel. Our results imply the existence of $k$-wise independent distributions with poor tail bounds, but give no immediate indication as to how to construct them!

We now describe the proof idea in slightly more detail. The answer to Question \ref{q:Kwise} can be posed in terms of the value of a certain linear program. The variables represent the probability distribution of the random variable $X$ and the constraints force $X$ to be $k$-wise independent. The objective of the linear program is maximize $\pr{}{|X \cdot r| \geq T}$. Thus, the value of the program is at most $\delta$ if and only if $k \geq k(n,\delta,T)$. Taking the dual of this linear program and appealing to strong duality yields an alternative characterization of $k(n, \delta, T)$. Namely, $k(n, \delta, T)$ is the smallest $k$ for which the threshold function $F_T(x) = \mathbbm{1}(|x| \ge T)$ admits an \emph{upper sandwiching polynomial} of degree $k$ and expectation at most $\delta$. Here, an upper sandwiching polynomial is simply a polynomial $p$ for which $p(x) \ge F_T(x)$ pointwise.

We then use ideas from weighted approximation theory to give a lower bound on $k$ for which such sandwiching polynomials exist. In order to apply these ideas, we make a few symmetrization and approximation arguments to reduce the problem to a continuous one-dimensional problem: Find a degree lower bound for a univariate polynomial that is a good upper sandwich for the function $f_T(x) = \sgn(|x| - T)$, with respect to a Gaussian distribution. As in our proof of Theorem \ref{thm:informal-noapprox}, the solution of this problem appeals to a weighted Markov-type inequality. Again, the idea is that an upper sandwich for $f_T$ must have a large jump at the threshold $T$, which is impossible for low-degree polynomials. The formal proof of this claim is based on a variant of an ``infinite-finite range'' inequality, which asserts that the weighted norm of a polynomial on the real line is bounded by its norm on a finite interval.

\section{Agnostically Learning Halfspaces}
The class of log-concave distributions over $\R^n$ (defined below) is essentially the broadest under which we know how to agnostically learn halfspaces. While many distributions used in machine learning are log-concave, such as the normal, Laplace, beta, and Dirichlet distributions, log-concave distributions do not capture everything. For instance, the log-normal distribution and heavier-tailed exponential power law distributions are not log-concave. The main motivating question for this section is whether we can relax the assumption of log-concavity for agnostically learning halfspaces. To this end, we show a negative result: for \LOL\ distributions, agnostic learning of halfspaces will require new techniques.

\subsection{Background}

Our starting point is the work of Kalai et al. \cite{KalaiKlMaSe08}. Among their results is the following.

\begin{theorem}[\cite{KalaiKlMaSe08}] \label{thm:aglearnhalf}
The concept class of halfspaces over $\R^n$ is agnostically learnable in time $\poly(n^{O_\varepsilon(1)})$ under log-concave distributions.
\end{theorem}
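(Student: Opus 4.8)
The plan is to run the polynomial regression ($L_1$ regression) algorithm of Section~\ref{sec:regression} with $\calH$ chosen to be the linear space of all $n$-variate real polynomials of degree at most $d = d(\eps)$, where $d = O(1/\eps^4)$ is fixed below — crucially independent of $n$. Since $\dim\calH = \binom{n+d}{d} = n^{O_\eps(1)}$ and the empirical $L_1$-minimiser $h^\star = \operatorname{argmin}_{h\in\calH}\sum_i|h(x_i)-y_i|$ is the solution of a linear program of that size, the overall running time is $\poly(n^{O_\eps(1)})$. The proof has two ingredients: (i) a \emph{generic} reduction — if every halfspace is $L_1(\mathcal D)$-approximable to error $\eps$ by some member of $\calH$, then the algorithm, followed by a thresholding step, outputs a hypothesis of error at most $\mathrm{opt}+O(\eps)$ — and (ii) the \emph{structural fact} that under an arbitrary log-concave $\mathcal D$ on $\R^n$, every halfspace is $L_1(\mathcal D)$-approximable to error $\eps$ by a polynomial of degree $O(1/\eps^4)$.

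For (i): fix an optimal halfspace $f^\star\in\C$ with $\err(f^\star)=\mathrm{opt}$, and let $p^\star\in\calH$ satisfy $\ex{x\sim\mathcal D}{|p^\star(x)-f^\star(x)|}\le\eps$; after clipping polynomial values to $[-1,1]$ (which only decreases the $L_1$ distance to the $\{\pm1\}$-valued label $y$, and which we also impose on $\calH$) we may assume every $h\in\calH$ maps into $[-1,1]$. Since $\ex{(x,y)\sim\mathcal D}{|f^\star(x)-y|}=2\,\mathrm{opt}$, the triangle inequality gives $\ex{}{|p^\star(x)-y|}\le 2\,\mathrm{opt}+\eps$. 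A standard uniform-convergence estimate over the clipped class (pseudodimension $O(\dim\calH)$) shows that $\poly(\dim\calH,1/\eps)$ samples suffice for the empirical and true values of $\ex{}{|h(x)-y|}$ to agree within $\eps$ for all $h\in\calH$ simultaneously; combined with optimality of $h^\star$, this yields $\ex{}{|h^\star(x)-y|}\le 2\,\mathrm{opt}+3\eps$. Finally, for $t$ uniform in $[-1,1]$, $v\in[-1,1]$, and $y\in\{\pm1\}$ one has the identity $\ex{t}{\mathbbm{1}[\sgn(v-t)\ne y]}=\tfrac12|v-y|$; applying it with $v=h^\star(x)$ and averaging over $(x,y)$ shows some threshold $t$ makes $x\mapsto\sgn(h^\star(x)-t)$ have error at most $\tfrac12\ex{}{|h^\star(x)-y|}\le\mathrm{opt}+\tfrac32\eps$. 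A good $t$ is found by evaluating finitely many candidates on a fresh sample; rescaling $\eps$ completes (i).

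For (ii): reduce to one dimension. A halfspace has the form $f(x)=\sgn(w\cdot x-\theta)$, so if $q$ is a univariate polynomial of degree $d$ with $\ex{s\sim\mu}{|q(s)-\sgn(s-\theta)|}\le\eps$, where $\mu$ is the law of the scalar $w\cdot x$ under $\mathcal D$, then $p(x):=q(w\cdot x)$ is an $n$-variate polynomial of degree $d$ with $\ex{x\sim\mathcal D}{|p(x)-f(x)|}\le\eps$. The marginal of a log-concave distribution is log-concave (Pr\'ekopa--Leindler), so $\mu$ is a one-dimensional log-concave density; by scale invariance of the argument we may take $\mu$ to have unit variance, in which case it has an absolute bound $\|\mu\|_\infty=O(1)$ on its density and sub-exponential tails. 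Now approximate in two steps. First, replace $\sgn(\cdot-\theta)$ by the $(1/\beta)$-Lipschitz ramp $g_\beta(s)=\max(-1,\min(1,(s-\theta)/\beta))$ with $\beta=\eps$; since $g_\beta$ and $\sgn(\cdot-\theta)$ differ only on $[\theta-\beta,\theta+\beta]$ and $\|\mu\|_\infty=O(1)$, we get $\ex{s\sim\mu}{|g_\beta(s)-\sgn(s-\theta)|}=O(\beta)=O(\eps)$. Second, approximate the bounded Lipschitz function $g_\beta$ by a polynomial in $L_2(\mu)$ (hence in $L_1(\mu)$): a Jackson-type theorem for polynomial approximation with respect to log-concave weights yields a degree-$d$ polynomial $q$ with $\|q-g_\beta\|_{L_2(\mu)} = O\big((1/\beta)/\sqrt{d}\big)$, so the choice $d=\Theta(1/(\beta\eps)^2)=\Theta(1/\eps^4)$ makes this $O(\eps)$. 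The triangle inequality then gives $\ex{s\sim\mu}{|q(s)-\sgn(s-\theta)|}=O(\eps)$ with $\deg q=O(1/\eps^4)$; rescaling $\eps$ finishes (ii), and hence the theorem.

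The main obstacle is the quantitative estimate in step (ii): obtaining a degree bound that is $\poly(1/\eps)$ \emph{and} uniform over all log-concave $\mathcal D$ and all dimensions $n$. The cleanest route is the weighted Jackson theorem invoked above; alternatively, one can truncate $\mu$ to an interval $[-R,R]$ of $(1-\eps)$-mass (with $R=O(\log(1/\eps))$) and apply classical Jackson on that interval, but then one must also control the polynomial growth of $q$ against the merely exponentially decaying tail of $\mu$ outside $[-R,R]$, which creates a delicate interplay between the degree and the truncation radius. Indeed, it is precisely the tail behaviour of the weight that governs whether such approximations exist at all — the same phenomenon that, on the negative side (for \LOL\ distributions with exponent $\gamma<1$), underlies Theorem~\ref{thm:informal-noapprox}. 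Ingredient (i) is, by contrast, entirely routine, modulo the cosmetic issue that polynomials are unbounded, which clipping handles.
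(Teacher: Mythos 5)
The paper does not actually prove this statement: it is imported verbatim from \cite{KalaiKlMaSe08}, and the surrounding text only recounts the two-step route you rediscover — the generic $L_1$-regression guarantee (Theorem \ref{thm:L1}) plus the projection of a halfspace $\sgn(w\cdot x-\theta)$ to the univariate problem of approximating $\sgn$ under the log-concave law of $w\cdot x-\theta$. Your part (i) is essentially that cited Theorem \ref{thm:L1} and is fine in outline (a cosmetic quibble: clipping the hypotheses to $[-1,1]$ breaks the linearity of $\calH$ that the LP step uses; the standard analysis avoids clipping by noting the thresholding identity already handles values outside $[-1,1]$).

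The genuine gap is in part (ii), which is the entire content of the theorem. Your degree bound rests on an asserted ``Jackson-type theorem for polynomial approximation with respect to log-concave weights'' with rate $\|q-g_\beta\|_{L_2(\mu)}=O\bigl((1/\beta)/\sqrt{d}\bigr)$. No such theorem is available uniformly over log-concave weights: that rate is the Hermite rate, special to the Gaussian weight, and it does not transfer to heavier log-concave tails. The critical case is a Laplace-type marginal $\mu\propto e^{-|x|}$, i.e.\ exactly the $\gamma=1$ endpoint of the paper's dichotomy: there the degree-independent Markov/Nikolskii inequalities behind Lemma \ref{lem:weighted-markov} only just fail (the constants now grow with the degree, but very slowly), and the same ``jump'' argument as in Proposition \ref{prop:noapproxforsign} then forces the degree needed to $\eps$-approximate $\sgn$ to be super-polynomial in $1/\eps$. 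This is consistent with the literature: the known polynomial approximations for general log-concave distributions have degree $\exp(\poly(1/\eps))$ — e.g.\ $\exp(\tilde{O}(1/\eps^4))$ in \cite{KaneKlMe13} — not $O(1/\eps^4)$. If your asserted inequality were true, your chain would produce $\poly(1/\eps)$-degree $L_1$ approximations of $\sgn$ under the Laplace distribution, which is almost certainly false and in any case nowhere justified. The alternative route you sketch (truncate to $[-R,R]$ with $R=O(\log(1/\eps))$ and apply classical Jackson) runs into precisely the obstruction you yourself flag — a degree-$d$ polynomial bounded on $[-R,R]$ grows like $(2|x|/R)^d$ and overwhelms a merely exponentially decaying tail unless $d\lesssim R$, which then ruins the on-interval approximation — and you do not resolve it. So the theorem survives only because $n^{O_\eps(1)}$ tolerates degree $\exp(\poly(1/\eps))$; as written, your proof does not establish the structural approximation step with any degree bound, and its quantitative claim contradicts the known behaviour at the log-concave boundary that this paper's own lower-bound machinery is designed to expose.
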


A log-concave distribution is an absolutely continuous probability distribution such that the logarithm of the probability density function is concave. For example, the standard multivariate Gaussian distribution on $\mathbb{R}^n$ has the probability density function $x \mapsto e^{-||x||_2^2/2}/(2 \pi)^{n/2}$. The natural logarithm of this is $-||x||_2^2/2 - n/2 \cdot \log (2\pi)$, which is concave. The class of log-concave distributions also includes the Laplace distribution and other natural distributions. However, it does not contain heavy-tailed distributions (such as power laws) nor non-smooth distributions (such as discrete probability distributions).

Kalai et al.~also show that we can agnostically learn halfspaces under the uniform distribution over the hypercube $\{\pm 1\}^n$ or over the unit sphere $\{x \in \mathbb{R}^n : ||x||_2=1\}$.

\subsection{The $L_1$ Regression Algorithm} \label{sec:regression}

The results of Kalai et al. are based on the so-called $L_1$ regression algorithm, which relies on being able to approximate the concept class in question by a low-degree polynomial:

\begin{theorem}[{\cite{KalaiKlMaSe08}}] \label{thm:L1}
Fix a distribution $\mathcal{D}$ on $X \times \{\pm 1\}$ and a concept class $\mathcal{C} \subset \{ f : X \to \{\pm 1\} \}$.\footnote{Here $X=\mathbb{R}^n$.} Suppose that, for all $f \in \mathcal{C}$, there exists a polynomial $p : X \to \mathbb{R}$ of degree at most $d$ such that $\ex{x \sim \mathcal{D}_X}{|p(x)-f(x)|} \leq \varepsilon$, where $\mathcal{D}_X$ is the marginal distribution of $\mathcal{D}$ on $X$. Then, with probability $1-\delta$ the $L_1$ regression algorithm outputs a hypothesis $h$ such that $$\pr{(x,y) \sim \mathcal{D}}{h(x) \ne y} \leq \min_{f \in \mathcal{C}} \pr{(x,y) \sim \mathcal{D}}{f(x) \ne y} + \varepsilon$$ in time $\poly(n^d,1/\varepsilon, \log(1/\delta))$ with access only to examples drawn from $\mathcal{D}$.
\end{theorem}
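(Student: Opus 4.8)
The plan is to analyze the $L_1$ regression algorithm in three stages: (i) a reduction from classification error to $L_1$ distance against the labels, (ii) a generalization argument over the linear space $\calH$ of degree-$\le d$ polynomials, and (iii) a thresholding step turning a good real-valued $L_1$ approximation into a good $\{\pm1\}$-valued classifier. Throughout write $\err_1(g):=\ex{(x,y)\sim\mathcal D}{|g(x)-y|}$ and $\widehat\err_1$ for its empirical counterpart on the drawn sample.

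\textbf{Step 1 (error vs.\ $L_1$).} For any Boolean $f:X\to\{\pm1\}$ the quantity $|f(x)-y|$ takes values in $\{0,2\}$, so $\err_1(f)=2\pr{(x,y)\sim\mathcal D}{f(x)\ne y}$. Fix $f^\star\in\C$ with $\pr{}{f^\star(x)\ne y}=\mathrm{opt}$ and let $p_0$ be the degree-$\le d$ polynomial from the hypothesis with $\ex{x}{|p_0(x)-f^\star(x)|}\le\eps$. Applying the triangle inequality pointwise and then in expectation gives $\err_1(p_0)\le 2\,\mathrm{opt}+\eps$, so $\calH$ contains a polynomial whose population $L_1$ error against the labels is at most $2\,\mathrm{opt}+\eps$.

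\textbf{Step 2 (ERM and uniform convergence).} The algorithm draws $m$ examples and solves $\min_{p\in\calH}\tfrac1m\sum_i|p(x_i)-y_i|$, a linear program in the $D=\binom{n+d}{d}=n^{O(d)}$ coefficients of $p$ (one slack variable per sample), solvable in time $\poly(n^d,m)$; call the optimum $p^\star$ and let $\tilde p^\star=\mathrm{clip}_{[-1,1]}(p^\star)$, noting that clipping only decreases $|p^\star(x)-y|$ (since $y=\pm1$) and does not change $\sgn(p^\star(x)-t)$ for $t\in(-1,1)$. The class $\{\mathrm{clip}_{[-1,1]}(p):p\in\calH\}$ is bounded in $[-1,1]$ and has pseudo-dimension $n^{O(d)}$, and the class of degree-$\le d$ polynomial threshold functions $\{x\mapsto\sgn(p(x)-t):p\in\calH,\,t\in\R\}$ has VC dimension $n^{O(d)}$; hence standard uniform-convergence bounds give that $m=\poly(n^d,1/\eps,\log(1/\delta))$ samples suffice so that, with probability $1-\delta$, empirical and population $L_1$ errors agree to within $\eps/8$ over the first class and empirical and population classification errors agree to within $\eps/8$ over the second. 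Combining with LP-optimality and (the one delicate point, below) the fact that the comparison polynomial $p_0$ may be taken to have range bounded by some $B=\poly(1/\eps)$, we obtain $\err_1(\tilde p^\star)\le\widehat\err_1(\tilde p^\star)+\eps/8\le\widehat\err_1(p^\star)+\eps/8\le\widehat\err_1(p_0)+\eps/8\le\err_1(p_0)+\eps/4\le 2\,\mathrm{opt}+2\eps$.

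\textbf{Step 3 (thresholding).} For any $g:X\to[-1,1]$ and $t$ uniform on $(-1,1)$, a one-line case check on $y=\pm1$ gives $\ex{t}{\pr{(x,y)}{\sgn(g(x)-t)\ne y}}=\tfrac12\err_1(g)$. Applying this to $g=\tilde p^\star$ yields some $t^\star\in(-1,1)$ with population classification error of $\sgn(p^\star(\cdot)-t^\star)$ at most $\tfrac12\err_1(\tilde p^\star)\le\mathrm{opt}+\eps$. The algorithm selects $t$ minimizing the empirical classification error of $\sgn(p^\star(\cdot)-t)$ — a step function of $t$ with breakpoints among the values $p^\star(x_i)$, so only $m+1$ candidates — and outputs $h=\sgn(p^\star(\cdot)-t)$. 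By the threshold-function uniform-convergence bound, the population error of $h$ is at most its empirical error $+\eps/8$, which is at most the empirical error of $\sgn(p^\star(\cdot)-t^\star)$ plus $\eps/8$, hence at most $\mathrm{opt}+\eps+\eps/4$. Replacing $\eps$ by a constant fraction of itself at the outset (and correspondingly in Step 1's hypothesis) yields $\pr{}{h(x)\ne y}\le\mathrm{opt}+\eps$ with the stated sample complexity and running time.

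\textbf{Main obstacle.} The crux is the $L_1$ generalization step: a priori the polynomials in $\calH$ are unbounded, so $|p(x)-y|$ need not be a bounded (or even bounded-variance) random variable, and naive uniform convergence fails. The fix is to justify restricting the comparison polynomial $p_0$ to bounded range — e.g.\ by truncating the (by Markov, small-probability) region where $|p_0|$ is large and re-approximating, or by adding sample-wise constraints $|p(x_i)|\le 1+O(\sqrt\eps)$ to the LP and checking that a near-feasible good solution survives — after which the pseudo-dimension $n^{O(d)}$ of bounded degree-$d$ polynomials closes the argument. In the concrete applications of this theorem (such as Theorem~\ref{thm:aglearnhalf}) the approximating polynomials are themselves bounded, so this subtlety does not arise there.
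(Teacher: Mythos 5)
A preliminary remark: the paper does not prove Theorem \ref{thm:L1} at all --- it is quoted from Kalai et al.\ \cite{KalaiKlMaSe08} --- so your proposal can only be compared with the original argument. Your Steps 1 and 3 (the triangle inequality giving a comparator $p_0$ with $\err_1(p_0)\le 2\,\mathrm{opt}+\eps$, and the random-threshold identity $\ex{t}{\pr{(x,y)}{\sgn(g(x)-t)\ne y}}=\tfrac12\err_1(g)$ for $g$ with values in $[-1,1]$) are exactly the ingredients of that proof. The genuine gap is the one you flag in Step 2 but do not close: the chain $\err_1(\tilde p^\star)\le\cdots\le\widehat{\err}_1(p_0)+\eps/8\le\err_1(p_0)+\eps/4$ requires, at confidence $1-\delta$, an upper bound on the empirical $L_1$ loss of the fixed comparator, and $|p_0(x)-y|$ is a nonnegative random variable of which the hypothesis controls only the mean. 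No sample size $m$ suffices: if $|p_0|$ takes a value of order $\eps m$ on an event of probability $\Theta(1/m)$ (which is consistent with $\ex{x\sim\mathcal{D}_X}{|p_0(x)-f^\star(x)|}\le\eps$, and realizable already by a degree-$1$ polynomial under a suitable $\mathcal{D}_X$), then with constant probability a single sample from that event inflates $\widehat{\err}_1(p_0)$ above $\err_1(p_0)+\eps/2$. Your proposed repairs do not work as stated: truncating $p_0$ does not yield a polynomial, so it cannot serve as a comparator in the LP; adding constraints $|p(x_i)|\le 1+O(\sqrt{\eps})$ may make $p_0$ infeasible, destroying $\widehat{\err}_1(p^\star)\le\widehat{\err}_1(p_0)$; and nothing in the hypothesis guarantees a degree-$d$ approximator of range $\poly(1/\eps)$ exists under an arbitrary $\mathcal{D}_X$ (in the same example above, every good degree-$1$ approximator is necessarily huge on the rare event). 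Nor is the parenthetical claim about the applications quite right: nonconstant polynomials are unbounded on $\R^n$, e.g.\ under log-concave marginals.

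The way the original proof avoids this is to reorder your steps so that the unbounded loss is only ever used through its expectation over the sample: $\E_S[\widehat{\err}_1(p_0)]=\err_1(p_0)\le 2\,\mathrm{opt}+\eps$, no concentration needed. The random-threshold identity is applied to the empirical distribution, so the empirical classification error of the output $h$ is at most $\tfrac12\widehat{\err}_1(p^\star)\le\tfrac12\widehat{\err}_1(p_0)$, and uniform convergence is invoked only for the $\{0,1\}$-valued loss over degree-$d$ polynomial threshold functions, whose VC dimension is $n^{O(d)}$ --- a bounded loss, so your pseudo-dimension worry disappears. Taking expectations over the sample then gives $\E_S\left[\pr{(x,y)\sim\mathcal{D}}{h(x)\ne y}\right]\le\mathrm{opt}+O(\eps)$; Markov's inequality converts this into a success guarantee with probability at least $\Omega(\eps)$ after rescaling $\eps$, and the stated $1-\delta$ confidence with only $\log(1/\delta)$ overhead comes from repeating the algorithm on fresh samples and selecting among the candidates with a validation set. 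With that reorganization your Steps 1 and 3 carry over verbatim and no boundedness of the comparator is ever required.
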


The $L_1$ regression algorithm solves a linear program to find a polynomial $p$ of degree at most $d$ that minimises $\sum_i |p(x_i)-y_i|$, where $(x_i,y_i)$ are the examples sampled from $\mathcal{D}$. The hypothesis is then $h(x) = \sgn(p(x)-t)$, where $t \in [-1,1]$ is chosen to minimise the error of $h$ on the examples.

Given Theorem \ref{thm:L1}, proving Theorem \ref{thm:aglearnhalf} reduces to showing that halfspaces can be approximated by low-degree polynomials under the distributions we are interested in. It is important to note that making assumptions on the distribution is necessary (barring a major breakthrough): Agnostically learning halfspaces under arbitrary distributions is at least as hard as PAC learning DNF formulas \cite{LeeBaWi95}. Moreover, proper learning of halfspaces under arbitrary distributions is known to be NP-hard \cite{FeldmanGoKhPo06}.

In fact, we can reduce the task of approximating a halfspace to a one-dimensional problem. A halfspace is given by $f(x) = \sgn(w \cdot x - \theta)$ for some $w \in \mathbb{R}^n$ and $\theta \in \mathbb{R}$. It suffices to find a univariate polynomial $p$ of degree at most $d$ such that $\ex{x \sim \mathcal{D}_{w,\theta}}{|p(x)-\sgn(x)|} \leq \varepsilon$, where $\mathcal{D}_{w,\theta}$ is the distribution of $w \cdot x - \theta$ when $x$ is drawn from $\mathcal{D}_X$. If $\mathcal{D}_X$ is log-concave, then so is $\mathcal{D}_{w,\theta}$.


\omitted{
Now we turn to constructing the polynomial approximations.

\begin{theorem}[{\cite{KaneKlMe13}}] \label{thm:polyapproxhalf}
Let $\mathcal{D}$ be a log-concave distribution over $\mathbb{R}^n$ and $h$ a halfspace. Then there exists a polynomial $p$ of degree  at most $k=\exp(\tilde{O}(1/\varepsilon^4))$ such that $\ex{x \sim \mathcal{D}}{|p(x)-h(x)|} \leq \varepsilon$.
\end{theorem}

Kalai et al.~and Kane et al.~give three very different proofs of this result (with differing dependencies on $\varepsilon$). Also, Kane et al.~prove this for $h$ being a function of $m$ halfspaces.

Kane et al.~give a proof based on the idea of `moment matching,' which comes from the pseudorandomness literature and generalises the notion of a $k$-wise independent distribution:

\begin{definition}[{\cite{KaneKlMe13}}]
Let $\mathcal{D}$ and $\mathcal{D}'$ be distributions on $\mathbb{R}^n$ we say that $\mathcal{D}'$ \textbf{$k$ moment matches} $\mathcal{D}$ if, for all polynomials $p : \mathbb{R}^n \to \mathbb{R}$ of degree at most $k$, we have $\ex{x \sim \mathcal{D}'}{p(x)} = \ex{x \sim \mathcal{D}}{p(x)}$.
\end{definition}

\begin{lemma}[{\cite{KaneKlMe13}}] \label{lem:momentmatchingdual}
Let $\mathcal{D}$ be a distribution on $\mathbb{R}^n$ with finite moments\footnote{This means $\ex{x \sim \mathcal{D}}{p(x)}$ exists and is finite for all polynomials $p$.} and $f:\mathbb{R}^n \to \{-1,1\}$. Then the following are equivalent.
\begin{itemize}
\item For every distribution $\mathcal{D}'$ that $k$ moment matches $\mathcal{D}$, $$|\ex{x \sim \mathcal{D}'}{f(x)} - \ex{x \sim \mathcal{D}}{f(x)}| \leq \varepsilon.$$
\item There exist polynomials $p_l,p_u : \mathbb{R}^n \to \mathbb{R}$ of degree at most $k$, such that
\begin{itemize}
\item $p_l(x) \leq f(x) \leq p_u(x)$ for all $x$ in the support of $\mathcal{D}$,
\item $\ex{x \sim \mathcal{D}}{f(x)-p_l(x)} \leq \varepsilon$, and
\item $\ex{x \sim \mathcal{D}}{p_u(x)-f(x)} \leq \varepsilon$.
\end{itemize}
\end{itemize}
\end{lemma}

The idea behind the proof of Lemma \ref{lem:momentmatchingdual} is to write a linear program for finding $\mathcal{D}'$ that $k$ moment matches $\mathcal{D}$ and maximises the error $\ex{x \sim \mathcal{D}'}{f(x)} - \ex{x \sim \mathcal{D}}{f(x)}$. Taking the dual of this linear program gives an equivalent formulation of the problem, which is to find $p_u$. Likewise, minimising the error gives $p_l$. This proves the equivalence, by strong duality.

Given Lemma \ref{lem:momentmatchingdual}, we can prove Theorem \ref{thm:polyapproxhalf} by showing that halfspaces cannot distinguish a log-concave distribution from one that $k$ moment matches it. This relies on a string of lemmas from the literature on the ``classical moment problem''.

\subsection{The Classical Moment Problem}

The classical moment problem asks for conditions under which a sequence of moments uniquely determine a distribution. Relatively recent extensions on this line of work ask how close two distributions must be given that they $k$ moment match. In particular, the following is known.

\begin{theorem}[{\cite{KlebanovRa96}}] \label{thm:classicalmoment}
Let $\mathcal{D}$ and $\mathcal{D}'$ be distributions on $\mathbb{R}^m$ such that, for any $t \in \mathbb{R}^m$ and $j \leq 2k$, $\ex{x \sim \mathcal{D}'}{(t \cdot x)^j} = \ex{x \sim \mathcal{D}}{(t \cdot x)^j}$. There exists a universal constant $C$ such that $$d_\lambda(\mathcal{D}',\mathcal{D}) \leq C \frac{1 + \mu_2(\mathcal{D})^{1/2}}{\left( \sum_{j=1}^k \mu_{2j}(\mathcal{D})^{-1/2j} \right)^{1/4}},$$
where $\mu_j(\mathcal{D}) = \sup_{t \in \mathbb{R}^m : ||t||_2=1} \ex{x \sim \mathcal{D}}{(t \cdot x)^j}$.\footnote{The definition of the $\lambda$-metric is not important. It is given by $d_\lambda(\mathcal{D}',\mathcal{D}) = \min_{T>0} \max \{ 1/T, \max_{||t||_2 \leq T} |\psi_{\mathcal{D}'}(t) - \psi_\mathcal{D}(t)\}$, where $\psi_\mathcal{D}(t) = \ex{x \sim \mathcal{D}}{e^{i t \cdot x}}$ is the characteristic function of $\mathcal{D}$ ($i = \sqrt{-1}$).}
\end{theorem}

We actually only need Theorem \ref{thm:classicalmoment} for $m=1$, as we consider the one-dimensional problem by projecting $\mathcal{D}_X$ to $\mathcal{D}_{w,\theta}$. Since $\mathcal{D}'$ $2k$ moment matches $\mathcal{D}$, the hypotheses of Theorem \ref{thm:classicalmoment} are satisfied. We can assume without loss of generality that $\mathcal{D}$ has unit variance and zero expectation. Since $\mathcal{D}$ is log-concave, we have that $\mu_j(\mathcal{D}) \leq j^j$. It follows that $d_\lambda(\mathcal{D}',\mathcal{D}) \leq O((\log k)^{-1/4})$.

Now we convert $\lambda$-distance into Levy distance:\footnote{Again, the definition of Levy distance is not important: $d_\mathrm{LV}(\mathcal{D}',\mathcal{D}) = \inf \{\varepsilon>0 : \forall t \in \mathbb{R}^m ~~ \pr{x \sim \mathcal{D}}{X < t - \varepsilon \vec{1}}-\varepsilon < \pr{x \sim \mathcal{D}'}{X < t} < \pr{x \sim \mathcal{D}}{X < t + \varepsilon \vec{1}}+\varepsilon\}$}

\begin{lemma} \label{lem:lambdatolevy}
Let $\mathcal{D}$ and $\mathcal{D}'$ be distributions on $\mathbb{R}$ such that $d_\lambda(\mathcal{D}',\mathcal{D}) \leq \delta$. Let $N_\delta>0$ be such that $\pr{x \sim \mathcal{D}}{x \notin [-N_\delta,N_\delta]} \leq \delta$ and likewise for $\mathcal{D}'$. Then $$d_\mathrm{LV}(\mathcal{D}',\mathcal{D}) \leq O(\delta \cdot \log (N_\delta/\delta)).$$
\end{lemma}

Now log-concavity gives us a tail bond $N_\delta = O(\log(1/\delta))$. Thus $d_\mathrm{LV}(\mathcal{D}',\mathcal{D}) \leq \tilde{O}((\log k)^{-1/4})$. Finally, we convert to cumulative density function distance $d_\mathrm{cdf}(\mathcal{D}',\mathcal{D}) = \sup_t |\pr{x \sim \mathcal{D}'}{x \geq t} - \pr{x \sim \mathcal{D}}{x \geq t}|$, which simply measures the $L_\infty$ distance between the cdfs of the distributions. To do so we make use of anti-concentration properties of log-concave distributions:

\begin{lemma}
Let $\mathcal{D}'$ and $\mathcal{D}$ be distributions on $\mathbb{R}$. Suppose that, for all $\varepsilon>0$ and $t \in \mathbb{R}$, we have $\pr{x \sim \mathcal{D}}{x \in [t, t+\varepsilon]} \leq \beta \cdot \varepsilon$. Then $d_\mathrm{cdf}(\mathcal{D}',\mathcal{D}) \leq d_\mathrm{LV}(\mathcal{D}',\mathcal{D}) \cdot \beta$.
\end{lemma}

This gives the desired bound: If $\mathcal{D}$ is log-concave and $\mathcal{D}'$ $2k$ moment matches $\mathcal{D}$, then $$|\ex{x \sim \mathcal{D}'}{\sgn(x)} - \ex{x \sim \mathcal{D}}{\sgn(x)}| \leq 2 d_\mathrm{cdf}(\mathcal{D}',\mathcal{D}) \leq \tilde{O}((\log k)^{-1/4}).$$
Setting $k = \exp(\tilde{O}(1/\varepsilon^4))$ and applying Lemma \ref{lem:momentmatchingdual}, we get Theorem \ref{thm:polyapproxhalf}.\subsection{Why Moment-Matching Fails}

In this section, we explain why the moment-matching technique of Kane et al. \cite{KaneKlMe13} fails to yield a degree upper-bound for threshold functions on univariate \LOL\ distributions. Recall that the first step of this technique involves using Theorem \ref{thm:classicalmoment} to get an upper bound on the $d_\lambda$ distance between a distribution $\mathcal{D}$ and any $2k$-moment matching distribution $\mathcal{D}'$. Let us focus our attention on the target distribution $\mathcal{D}$ with canonical \LOL\ density $w_\gamma$. We estimate the moments of $w_\gamma$:

\begin{lemma}
For $s \ge 0$, overload notation by defining
\[\mu_s(\gamma) = \int_0^\infty t^s \exp(-t^\gamma) \ dt.\]
Then
\[\mu_s(\gamma) = \frac{ (s + 1 - \gamma) \mu_{s - \gamma}(\gamma)}{\gamma}.\]
\end{lemma}

Note that when $s$ is a natural number, this is (up to a constant factor) the $s$th moment of the \LOL\ distribution $w_\gamma$.

\begin{proof}
Integrate by parts:
\[\int_0^\infty t^s \exp(-t^\gamma) \ dt = \left. -\frac{1}{\gamma} t^{s + 1 - \gamma} \exp( -t^\gamma) \right|_0^\infty + \int_0^\infty \frac{(s + 1 - \gamma)}{\gamma} t^{s - \gamma} \exp(-t^\gamma) \ dt.\]
\end{proof}

By differentiating under the integral, it's easy to see that $\mu_s$ is increasing. So by induction,
\[\mu_k(\gamma) = (\eta(\gamma)k)^{k / \gamma}\]
for some function $\eta$. Observe that this is larger by a $1/\gamma$ factor in the exponent than the moment bound for log-concave densities used in \cite{KaneKlMe13}. This factor is responsible for why their argument fails to extend to \LOL\ distributions.

Let $\mathcal{D}'$ be $2k$-moment matching to $\mathcal{D}$. An attempt to apply Theorem \ref{thm:classicalmoment} yields
\[d_{\lambda}(\mathcal{D}', \mathcal{D}) \lesssim C \frac{1 + \mu_2(\mathcal{\gamma})^{1/2}}{\left( \sum_{j=1}^k \mu_{2j}(\mathcal{\gamma})^{-1/2j} \right)^{1/4}} \approx_\gamma \left(\sum_{j=1}^k j^{-1/\gamma}\right)^{-1/4}, \]
where $\approx_\gamma$ hides a constant depending on $\gamma$. But this presents a problem, since for $\gamma < 1$, the sum on the right-hand side converges. Therefore, the best upper bound this yields for $d_\lambda(\mathcal{D}', \mathcal{D})$ is $\Omega(1)$, i.e., it does not vanish in $k$! So even if we take the degree $k$ to be arbitrarily large, we cannot get a vanishing upper bound on $d_\lambda$.

}

\subsection{On the Density of Polynomials}

In this section, we give some intuition for why one might expect that polynomial approximations do not suffice for learning under \LOL\ distributions. It turns out that under a \LOL\ distribution $w$, polynomials actually fail to be dense in the space $C_0[w]$ of continuous functions vanishing at infinity when weighted by $w$. This is in stark contrast to the classical Weierstrass approximation theorem, which asserts that the polynomials are dense in $C_0$ under the uniform weight. These kinds of results address \emph{Bernstein's approximation problem} \cite{Bernstein24}, a precise statement of which is as follows.

\begin{question}
Let $w: \R \to [0, 1]$ be a measurable function. Let $C_0[w]$ denote the space of continuous functions $f$ for which $\lim_{|x| \to \infty} f(x)w(x) = 0$. Under what conditions on $w$ is it true that for every $f \in C_0[w]$, there is a sequence of polynomials $\{p_n\}_{n=1}^\infty$ for which
\[\lim_{n \to \infty}\|(p_n - f)w\|_\infty = 0?\]
\end{question}
(The choice of the $L_\infty$ norm here appears to make very little difference). If Bernstein's problem admits a positive resolution, we say that the polynomials are \emph{dense} in $C_0[w]$. The excellent survey of Lubinsky \cite{Lubinsky07} presents a number of criteria for when polynomials are dense. The one that is most readily applied was proved by Carleson \cite{Carleson51} (but appears to be implicit in \cite{IzumiKa37}):

\begin{theorem}
Let $w$ be even and positive with $\log(w(e^x))$ concave. Then the polynomials are dense in $C_0[w]$ iff
\[\int_0^\infty \frac{\log w(x)}{1 + x^2} \mathrm{d}x = -\infty.\]
\end{theorem}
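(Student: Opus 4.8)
The plan is to pass to the dual problem via Hahn--Banach and then recognize the resulting statement about measures as an instance of the Denjoy--Carleman theorem on quasianalyticity. Write $w(x) = e^{-W(x)}$, so $W \ge 0$ is even, the hypothesis that $\log(w(e^x))$ be concave becomes ``$W(e^x)$ is convex in $x$'', and the integral condition becomes $\int_0^\infty W(x)/(1+x^2)\,\mathrm{d}x = +\infty$. A bounded linear functional on $C_0[w]$ is, by the Riesz representation theorem applied to the isometry $f \mapsto fw$, of the form $f \mapsto \int f w\,\mathrm{d}\mu$ for a finite signed Borel measure $\mu$; a standard separation argument then shows that the polynomials fail to be dense in $C_0[w]$ if and only if there is a nonzero finite signed measure $\sigma$ (take $\mathrm{d}\sigma = w\,\mathrm{d}\mu$) with
\[\int_{\R} x^n\,\mathrm{d}\sigma(x) = 0 \ \text{ for all } n \ge 0, \qquad \int_{\R} e^{W(x)}\,\mathrm{d}|\sigma|(x) < \infty.\]
So it suffices to characterize when such a dual measure exists.

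Assume first that $\int_0^\infty W(x)/(1+x^2)\,\mathrm{d}x = \infty$; I claim no such $\sigma$ can exist. Suppose one does and set $\Phi(\xi) = \int_{\R} e^{i\xi x}\,\mathrm{d}\sigma(x)$. Then $\Phi \in C^\infty(\R)$ with $\Phi^{(n)}(0) = i^n\int x^n\,\mathrm{d}\sigma = 0$ for every $n$, and $|\Phi^{(n)}(\xi)| \le \int |x|^n\,\mathrm{d}|\sigma| \le M_n := C\sup_{x}\bigl(|x|^n e^{-W(x)}\bigr)$ with $C = \int e^{W}\,\mathrm{d}|\sigma|$, so $\Phi$ lies in the Carleman class $C\{M_n\}$. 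Here the convexity of $W(e^x)$ does the real work: it makes $\{M_n\}$ log-convex, with $\log M_n = \log C + \sup_s(ns - W(e^s))$ the Legendre transform of $s \mapsto W(e^s)$, and hence forces the associated function $T(r) = \sup_n r^n/M_n$ to satisfy $\log T(r) \asymp W(r)$ (Fenchel biconjugation recovers the convex function $W(e^s)$). By the Denjoy--Carleman theorem, $C\{M_n\}$ is quasianalytic precisely when $\int_1^\infty \log T(r)/(1+r^2)\,\mathrm{d}r = \infty$, which holds under our hypothesis; therefore $\Phi$, having vanishing Taylor jet at $0$, is identically $0$, so $\sigma = 0$ --- a contradiction. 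Hence the polynomials are dense.

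Conversely, assume $\int_0^\infty W(x)/(1+x^2)\,\mathrm{d}x < \infty$. Then, with $M_n$ as above (or a slightly smaller log-convex minorant), $\int_1^\infty \log T(r)/(1+r^2)\,\mathrm{d}r < \infty$, so $C\{M_n\}$ is \emph{non}-quasianalytic and the Denjoy--Carleman construction yields a nonzero $\Phi \in C\{M_n\}$ supported in $[-1,1]$ with $\Phi^{(n)}(0) = 0$ for all $n$. Let $\sigma$ be the measure with density $\check\Phi$ (inverse Fourier transform). Then $\widehat\sigma = \Phi$, so $\int x^n\,\mathrm{d}\sigma = (-i)^n\Phi^{(n)}(0) = 0$; and integrating the compactly supported $\Phi$ by parts $n$ times gives $|\check\Phi(x)| \le M_n/|x|^n$ for every $n$, i.e.\ $|\check\Phi(x)| \lesssim 1/T(|x|) \asymp e^{-W(x)}$, with enough slack (shrink $M_n$ slightly) that $\int e^{W}\,\mathrm{d}|\sigma| < \infty$. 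So a valid dual measure exists, and the polynomials are not dense. (Equivalently, one can build $\sigma$ as a discrete measure on the zeros of an entire function of exponential type zero whose modulus on $\R$ is of order $e^{W(x)}$, an object that exists exactly when the logarithmic integral converges; this is the Fourier dual of the construction just given.)

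The main obstacle is not a single idea but the analytic bookkeeping needed to make the correspondence $\log T(r) \asymp W(r)$ honest: one must control $W$ near the origin and at any irregular points, treat $n$ as a continuous variable with only $O(1)$ error, and verify that $\{M_n\}$ (or its regularization) is log-convex so that the Denjoy--Carleman criterion applies verbatim --- and it is precisely the hypothesis ``$W(e^x)$ convex'' that makes all of this go through. A secondary technical point is the duality setup itself: one should check that positivity of $w$ (together with the continuity forced by log-concavity of $w(e^x)$) suffices to identify $C_0[w]^*$ and to move freely between $\mu$ and $\sigma$. Modulo these routine-but-delicate points, this is the classical solution of Bernstein's approximation problem (Akhiezer, Mergelyan, Pollard, and others), specialized to even log-concave-type weights.
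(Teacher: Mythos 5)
The paper does not actually prove this theorem: it is quoted as background, attributed to Carleson \cite{Carleson51} (and noted to be implicit in Izumi--Kawata), so there is no internal proof to compare you against. Judged on its own terms, your outline is a correct sketch of the classical resolution of Bernstein's problem, and it follows the quasianalyticity route (in the spirit of Izumi--Kawata and of Koosis's treatment) rather than Carleson's more elementary original argument: the Hahn--Banach/Riesz step correctly reduces non-density to the existence of a nonzero measure $\sigma$ annihilating all monomials with $\int e^{W}\,\mathrm{d}|\sigma|<\infty$; the Fourier transform of such a $\sigma$ lies in the Carleman class $C\{M_n\}$ with $M_n = C\sup_x |x|^n e^{-W(x)}$; and the hypothesis that $W(e^s)$ is convex is exactly what makes $\log M_n$ a Legendre transform, so that $\log T(r)$ and $W(r)$ agree up to $O(\log r)$, which is negligible against $1/(1+r^2)$. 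Two of the points you wave at deserve emphasis if this were written out. First, in the converse direction the bound $|\check\Phi(x)|\lesssim 1/T(|x|)\asymp e^{-W(x)}$ is not integrable against $e^{W}$, so one must construct the bump in a shifted class (e.g.\ demand $\|\Phi^{(n)}\|_\infty\le M_{n-2}$, which costs only an extra $\int 2\log r/(1+r^2)\,\mathrm{d}r<\infty$ in the Denjoy--Carleman criterion), and the construction must give these bounds with no extra geometric factor $B^n$, since such a factor rescales the argument of $T$ and can ruin integrability; the standard infinite-convolution-of-intervals construction with lengths $a_j = M_{j-1}/M_j$ does this, and $\sum_j a_j<\infty$ is precisely non-quasianalyticity for log-convex $M_n$. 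Second, the statement implicitly assumes $x^n w(x)\to 0$ for all $n$ (so that polynomials lie in $C_0[w]$ and all $M_n<\infty$); in the divergence direction this is automatic, since if $w(x_k)\ge x_k^{-n}$ along $x_k\to\infty$ then concavity of $s\mapsto\log w(e^s)$ forces $w(x)\ge w(1)x^{-n}$ for all $x\ge 1$, making the logarithmic integral finite. With those caveats made explicit, your proposal is a sound proof plan for the cited theorem.
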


This immediately yields the following dichotomy result for exponential power distributions:

\begin{corollary}
For $\gamma > 0$ and $w_\gamma(x) = \exp(-|x|^\gamma)$, the polynomials are dense in $C_0[w_\gamma]$ iff $\gamma \ge 1$.
\end{corollary}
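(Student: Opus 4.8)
The plan is to apply the theorem of Carleson quoted immediately above to the specific weight $w_\gamma(x) = \exp(-|x|^\gamma)$ and simply evaluate the integral criterion. First I would verify the hypotheses: $w_\gamma$ is manifestly even and positive, and I must check that $x \mapsto \log(w_\gamma(e^x))$ is concave. Since $\log(w_\gamma(e^x)) = -|e^x|^\gamma = -e^{\gamma x}$, and $-e^{\gamma x}$ has second derivative $-\gamma^2 e^{\gamma x} < 0$, this function is concave (indeed strictly so) for every $\gamma > 0$. So Carleson's theorem applies for all $\gamma > 0$, and the question of density reduces entirely to whether
\[\int_0^\infty \frac{\log w_\gamma(x)}{1 + x^2}\, \mathrm{d}x = \int_0^\infty \frac{-x^\gamma}{1 + x^2}\, \mathrm{d}x\]
diverges to $-\infty$.

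Next I would analyze the convergence of $\int_0^\infty \frac{x^\gamma}{1+x^2}\, \mathrm{d}x$. Near $x = 0$ the integrand behaves like $x^\gamma$, which is integrable for all $\gamma > 0$ (in fact for all $\gamma > -1$), so the only issue is the tail. As $x \to \infty$, the integrand behaves like $x^{\gamma - 2}$, and $\int^\infty x^{\gamma-2}\,\mathrm{d}x$ converges if and only if $\gamma - 2 < -1$, i.e. $\gamma < 1$, and diverges if and only if $\gamma \ge 1$. Hence the integral criterion $\int_0^\infty \frac{\log w_\gamma(x)}{1+x^2}\,\mathrm{d}x = -\infty$ holds precisely when $\gamma \ge 1$, which by Carleson's theorem is exactly the condition for the polynomials to be dense in $C_0[w_\gamma]$. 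This gives the claimed dichotomy. (One should note the boundary case $\gamma = 1$ lands on the "dense" side, consistent with the Laplace distribution being log-concave, which also fits the discussion in the surrounding text.)

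There is essentially no hard part here — the corollary is a direct specialization of the preceding theorem — so the only things to be careful about are the routine ones: confirming $\log(w_\gamma(e^x))$ is concave (not convex) so that the theorem is genuinely applicable, and getting the direction of the comparison test right at infinity so that the divergence threshold comes out as $\gamma \ge 1$ rather than $\gamma > 1$ or $\gamma \le 1$. I would write the proof as two short sentences: one invoking the hypotheses and the concavity check, and one doing the integral comparison. If one wanted to avoid citing Carleson, an alternative (more laborious) route would be to directly exhibit, for $\gamma < 1$, a function $f \in C_0[w_\gamma]$ together with a measure annihilating all polynomials but not $f$ (the classical Hamburger-type construction using the fact that $w_\gamma$ has a "Stieltjes-indeterminate" moment sequence when $\gamma < 1$), but since Carleson's theorem is available as a black box, that is unnecessary.
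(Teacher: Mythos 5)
Your proof is correct and matches the paper's (implicit) argument exactly: the corollary is stated as an immediate consequence of Carleson's theorem, and your verification that $\log(w_\gamma(e^x)) = -e^{\gamma x}$ is concave together with the tail comparison $x^\gamma/(1+x^2) \sim x^{\gamma-2}$ (divergent iff $\gamma \ge 1$) is precisely the intended specialization. Nothing is missing.
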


In particular, this justifies our assertion that the polynomials fail to be dense in the continuous functions under \LOL\ distributions.

So what does this have to do with agnostically learning halfspaces? Recall that the analysis of the $L_1$-regression algorithm of Kalai et al. \cite{KalaiKlMaSe08} reduces approximating a halfspace under a distribution $\mathcal{D}$ to the problem of approximating each threshold function $\sgn(x - \theta)$ under each marginal distribution of $\mathcal{D}$. So for the algorithm to work, we require $\mathcal{D}$ to have marginals $w$ under which $\sgn(x - \theta)$ can be approximated arbitrarily well by polynomials. Now if the polynomials are dense in $C_0[w]$, then threshold functions can also be approximated arbitrarily well (since $C_0[w]$ is in turn dense in $L_1[w]$). Such an appeal to density  actually underlies Kalai et al.'s proof of approximability under log-concave distributions. On the other hand, if the polynomials fail to be dense, then one might conjecture that thresholds cannot be arbitrarily well approximated.

Our result, presented in the next section, confirms the conjecture that even the \emph{sign} function cannot be approximated arbitrarily well by polynomials under \LOL\ distributions.

\subsection{Lower Bound for One Variable} \label{sec:lowerbound}

Consider the \LOL\ density function
\[w_\gamma(x) := C(\gamma) \exp(- |x|^\gamma)\]
on the reals for $\gamma \in (0, 1)$, where $C(\gamma)$ is a normalizing constant. Define the sign function $\sgn(x) = 1$ if $x \ge 0$ and $\sgn(x) = -1$ otherwise. In this section, we show that for sufficiently small $\eps$, the sign function does not have an $L_1$ approximation under the distribution $w_\gamma$. More formally,

\begin{proposition} \label{prop:noapproxforsign}
For any $\gamma \in (0, 1)$, there exists an $\eps = \eps(\gamma)$ such that for any polynomial $p$,
\[\int_\R |p(x) - \sgn(x)| w_\gamma(x) \ dx > \eps.\]
\end{proposition}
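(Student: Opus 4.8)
The plan is to follow the intuition sketched in the introduction: a good $L_1$ approximation to $\sgn$ must be bounded in some $L_1$ (hence, after a Markov-type step, pointwise) sense near the origin, yet must jump from roughly $-1$ to roughly $+1$ across $0$, so it has a large derivative near $0$; but under a \LOL\ weight, bounded polynomials have derivative at the origin bounded by an absolute constant independent of degree, a contradiction once $\eps$ is small enough. The first step is to record the key weighted Markov-type inequality: there is a constant $M = M(\gamma)$ such that for every polynomial $p$ with $\int_\R |p(x)| w_\gamma(x)\,dx \le 1$, we have $|p'(0)| \le M$ (and more generally $|p'(x)| \le M$ for $x$ in a fixed neighborhood of $0$). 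This is exactly the ``derivative bounded by a constant independent of degree'' phenomenon advertised in the introduction for \LOL\ distributions, and I would cite or prove it from standard weighted-approximation facts (e.g.\ infinite--finite range inequalities plus the classical Markov inequality on a fixed interval, using that $w_\gamma$ decays subexponentially so that mass near the origin controls mass on all of $\R$).

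Next, suppose toward a contradiction that $p$ satisfies $\int_\R |p(x) - \sgn(x)| w_\gamma(x)\,dx \le \eps$. Since $|\sgn(x)| = 1$ everywhere and $\int_\R w_\gamma = 1$, the triangle inequality gives $\int_\R |p(x)| w_\gamma(x)\,dx \le 1 + \eps$, so (rescaling $M$ by a factor $2$, say) the Markov-type inequality applies and $|p'(x)| \le M' := 2M$ on a fixed interval $[-\rho,\rho]$ around $0$. Now I extract the ``jump.'' Because the $L_1$ error is at most $\eps$, for a constant fraction of points $x$ in $[-\rho, 0]$ we have $p(x) \le -1 + \tfrac12$ and for a constant fraction of points in $[0,\rho]$ we have $p(x) \ge 1 - \tfrac12$: more precisely, by Markov's inequality on the error, the set where $|p(x) - \sgn(x)| > 1/2$ has $w_\gamma$-measure at most $2\eps$, so on $[-\rho,0]$ (which has $w_\gamma$-measure $\ge c(\gamma,\rho) > 0$) there is a point $a$ with $p(a) \le -1/2$ provided $2\eps < c(\gamma,\rho)$, and similarly a point $b \in [0,\rho]$ with $p(b) \ge 1/2$. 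Then by the mean value theorem $|p'(\xi)| = |p(b) - p(a)|/|b-a| \ge 1/(2\rho)$ for some $\xi \in [a,b] \subset [-\rho,\rho]$. Choosing $\eps = \eps(\gamma)$ small enough that simultaneously $2\eps < c(\gamma,\rho)$ and $1/(2\rho) > M'$ — note $\rho$ and $M'$ depend only on $\gamma$, so this is a legitimate choice — contradicts $|p'(\xi)| \le M'$. Hence no such $p$ exists, proving the proposition.

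The main obstacle is establishing the degree-independent weighted Markov inequality $|p'(0)| \le M(\gamma)$ for $\|p w_\gamma\|_1 \le 1$; this is the real content and is presumably where the paper invokes weighted approximation theory (Freud-type / Nevai-type inequalities, or an infinite--finite range inequality reducing to a bounded interval). Everything after that — converting $L_1$ smallness of the error into the existence of a large jump, and applying the mean value theorem — is elementary and only requires tracking that all constants ($\rho$, $c(\gamma,\rho)$, $M$) depend on $\gamma$ alone and not on $\deg p$. One mild subtlety to handle carefully: I should state the Markov inequality in the form I actually use (derivative bounded on a fixed interval, not just at the point $0$), and make sure the interval on which the jump is located is the same fixed interval; both are immediate from the same infinite--finite range machinery, so this is a bookkeeping point rather than a genuine difficulty.
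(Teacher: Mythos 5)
Your proposal is correct and takes essentially the same route as the paper's own proof: bound $\int_\R |p|w_\gamma \le 1+\eps$, invoke a degree-independent weighted Markov-type derivative bound (the paper's Lemma~\ref{lem:weighted-markov}, which it gets by citing a Nevai--Totik $L_1$ Markov-type inequality together with a Nikolskii-type inequality, exactly the ``cite it from weighted approximation theory'' step you flag as the real content), find points near the origin with $p \le -1/2$ and $p \ge 1/2$, and apply the mean value theorem. The only cosmetic difference is parametric: the paper locates the jump on an interval of width $\delta = O(\eps/C(\gamma))$, so the derivative lower bound is $\Omega(1/\eps)$ against a constant upper bound, whereas you fix a small interval $[-\rho,\rho]$ with $\rho$ chosen in terms of $M(\gamma)$ and then take $\eps$ small; both quantifier orderings yield the same contradiction.
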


The proof is based on the following Markov-type inequality, which roughly says that a bounded polynomial cannot have a large derivative (under the weight $w_\gamma$). This implies the claim, since the sign function we are trying to approximate has a large ``jump'' at the origin.

\begin{lemma} \label{lem:weighted-markov}
For $\gamma \in (0, 1)$ there is a constant $M(\gamma)$ such that 
\[ \sup_{x \in \mathbb{R}} (|p'(x)| w_\gamma(x))  \le M(\gamma) \int_\R |p(x)| w_\gamma(x) \ dx.\]
\end{lemma}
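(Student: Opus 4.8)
The plan is to reduce the stated mixed inequality (an $L_\infty$ bound on $p'w_\gamma$ against an $L_1$ bound on $pw_\gamma$) to two ingredients, each with a constant independent of $\deg p$: a weighted Markov--Bernstein inequality $\sup_x|p'(x)|w_\gamma(x)\le C(\gamma)\sup_x|p(x)|w_\gamma(x)$, and a weighted Nikolskii inequality $\sup_x|p(x)|w_\gamma(x)\le M'(\gamma)\int_\R|p(x)|w_\gamma(x)\,dx$; chaining them gives the Lemma with $M(\gamma)=C(\gamma)M'(\gamma)$. The whole reason the constants can be made degree-independent is that $\gamma<1$: for the weight $w_\gamma(x)=C(\gamma)e^{-|x|^\gamma}$ the Mhaskar--Rakhmanov--Saff number $a_d$ of a degree-$d$ polynomial satisfies $a_d\asymp_\gamma d^{1/\gamma}$, which grows \emph{superlinearly}, so the usual Bernstein factor $d/a_d\asymp d^{1-1/\gamma}$ is bounded (in fact tends to $0$) rather than growing. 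Heuristically, a polynomial with $|p\,w_\gamma|\le1$ everywhere is so constrained by the weight that near any fixed point it can only vary on scale $\gtrsim1$, hence has bounded derivative there; for $\gamma\ge1$ this fails, matching the fact that thresholds \emph{are} approximable under log-concave distributions.

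For the first ingredient I would invoke the sharp weighted Markov--Bernstein inequality from weighted approximation theory (see Lubinsky's survey \cite{Lubinsky07} and the references therein): $\sup_x|p'(x)|w_\gamma(x)\le C'(\gamma)\,\tfrac{d}{a_d}\,\sup_x|p(x)|w_\gamma(x)$ for $\deg p=d$, and then bound $\tfrac{d}{a_d}\le\mathrm{const}(\gamma)$ using $a_d\asymp_\gamma d^{1/\gamma}$ together with $1-\tfrac1\gamma\le0$. Alternatively one can re-derive this from an infinite--finite range inequality (the weighted sup-norm of $p$ essentially lives on $[-a_d,a_d]$), the classical Markov inequality rescaled to $[-a_d,a_d]$, and the slow variation of $w_\gamma$ on that interval. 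I expect \textbf{this step to be the main obstacle}: although $w_\gamma$ is very regular in that $xQ'(x)/Q(x)\equiv\gamma$ for $Q(x)=|x|^\gamma$, it lies just outside the classical Freud range (here $Q$ is \emph{concave} and $Q'$ \emph{decreasing} on $(0,\infty)$, and polynomials fail to be dense), so one must be careful about which result in the literature applies; getting exactly the form with the weight undifferentiated --- $|p'|w_\gamma$ rather than $|(pw_\gamma)'|$, the latter being unbounded near $0$ --- is the crux.

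The second ingredient is then elementary given the first. Since $|p(x)|w_\gamma(x)\to0$ as $|x|\to\infty$, the value $V:=\sup_x|p(x)|w_\gamma(x)$ is attained at some $x^*$, say $V=|p(x^*)|w_\gamma(x^*)$. By the first ingredient $|p'(x)|\le C(\gamma)V/w_\gamma(x)$ for all $x$. Using the Hölder-type inequality $\big||x|^\gamma-|x^*|^\gamma\big|\le|x-x^*|^\gamma$ (valid for $\gamma\in(0,1)$), one gets $w_\gamma(x)\ge\tfrac12w_\gamma(x^*)$ whenever $|x-x^*|\le(\ln2)^{1/\gamma}$, and combining this with the derivative bound, $|p(x)|\ge|p(x^*)|-\int|p'|\ge\tfrac12|p(x^*)|$ whenever also $|x-x^*|\le\tfrac1{4C(\gamma)}$. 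Hence on the interval $J$ of radius $\rho(\gamma):=\min\{(\ln2)^{1/\gamma},\,1/(4C(\gamma))\}$ about $x^*$ we have $|p(x)|w_\gamma(x)\ge V/4$, so $\int_\R|p|w_\gamma\ge\int_J|p|w_\gamma\ge\tfrac12\rho(\gamma)V$, i.e.\ $V\le M'(\gamma)\int_\R|p|w_\gamma$ with $M'(\gamma)=2/\rho(\gamma)$. Chaining the two ingredients yields the Lemma with $M(\gamma)=2C(\gamma)/\rho(\gamma)$. The one thing to be sure of here is that $\rho(\gamma)$ does not depend on $x^*$ --- which is precisely what the Hölder-type bound on $w_\gamma$ delivers.
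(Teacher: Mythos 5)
Your proposal is correct in substance, but it takes a genuinely different decomposition from the paper's. The paper chains two black-box results from Nevai's survey in the opposite order through a different intermediate norm: first a Nikolskii-type inequality applied to $p'$, namely $\sup_x |p'(x)|w_\gamma(x) \le C_2(\gamma)\int_\R |p'(x)|w_\gamma(x)\,dx$, and then an $L_1$ Markov--Bernstein inequality $\int_\R |p'|w_\gamma \le C_1(\gamma)\int_\R |p|w_\gamma$ (both due to Nevai--Totik, with degree-independent constants precisely because $\gamma<1$). You instead pass through $\sup_x|p(x)|w_\gamma(x)$: you cite an $L_\infty$ Markov--Bernstein inequality with degree-free constant and then \emph{derive} the Nikolskii step yourself; your elementary derivation (maximum attained since $|p|w_\gamma \to 0$, the subadditivity bound $\bigl||x|^\gamma - |x^*|^\gamma\bigr| \le |x-x^*|^\gamma$ keeping the weight within a factor $2$ on a unit-scale interval, and the derivative bound keeping $|p|$ within a factor $2$ there) is correct, with the bookkeeping $M(\gamma)=2C(\gamma)/\rho(\gamma)$ checking out. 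What your route buys is that only one result needs to be imported from the weighted-approximation literature, and that import does exist for $0<\gamma<1$ (Nevai--Totik; see also the discussion in Lubinsky's survey), i.e.\ the concern you flag as the crux is real but surmountable --- it is the same non-Freud subtlety the paper's citations already address. One caution: your fallback sketch for the $L_\infty$ Markov step (infinite--finite range plus the classical Markov inequality rescaled to $[-a_d,a_d]$) would give a factor of order $d^2/a_d \asymp d^{2-1/\gamma}$, which is bounded only for $\gamma \le 1/2$, and the naive handling of the weight's variation across the whole interval $[-a_d,a_d]$ also breaks down; so for $\gamma\in(1/2,1)$ that sketch is not a substitute for the actual Nevai--Totik theorem, and your proof should rest on the citation rather than the sketch.
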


\begin{proof}
The lemma is a combination of a Markov-type inequality and a Nikolskii-type, available in a survey of Nevai \cite{Nevai86}:

\begin{theorem}[\cite{NevaiTo86}, {\cite[Theorem 4.17.4]{Nevai86}}]
There exists a constant $C_1(\gamma)$ such that for any polynomial $p$,
\[\int_{\R} |p'(x)| w_\gamma(x) \ dx \le C_1(\gamma) \int_{\R} |p(x)| w_\gamma(x) \ dx.\]
\end{theorem}

\begin{theorem}[\cite{NevaiTo87}, {\cite[Theorem 4.17.5]{Nevai86}}]
There exists a constant $C_2(\gamma)$ such that for any polynomial $p$,
\[\sup_{x} (|p(x)| w_\gamma(x)) \le C_2(\gamma) \int_{\R} |p(x)| w_\gamma(x) \ dx.\]
\end{theorem}

\end{proof}

\begin{proof}[Proof of Proposition \ref{prop:noapproxforsign}]
Fix $\eps \in (0, 1)$ and suppose $p$ is a polynomial satisfying
\[\int_\R |p(x) - \sgn(x)| w_\gamma(x) \ dx \le \eps.\]
Since the absolute value of the sign function integrates to $1$, this forces
\[\int_\R |p(x)| w_\gamma(x) \ dx \le 1 + \eps \le 2.\]
Therefore, we have by Lemma \ref{lem:weighted-markov} that $|p'(x)| w_\gamma(x) \le 2M(\gamma)$ for every $x$.

The idea is now to show that there is some $x_0$ for which $|p'(x_0)|w_\gamma(x_0) \ge \Omega(1/\eps)$. To see this, let $\delta = 4\eps / C(\gamma)$ and observe that there must exist some $x_+ \in [0, \delta]$ such that $p(x_+) \ge 1/2$. If this were not the case, then we would have
\[\int_\R |p(x)| w_\gamma(x) \ dx \ge \frac{1}{2}\int_0^\delta C(\gamma) \exp(-\delta^\gamma) \ge \eps\]
for $\delta$ small enough to make $\exp(-\delta^\gamma) \ge 1/2$, yielding a contradiction. A similar argument shows that there is some $x_- \in [-\delta, 0]$ with $p(x_-) \le -1/2$. Therefore, by the mean value theorem, there is some $x_0 \in [x_-, x_+]$ with $p'(x_0) \ge 1/2\delta = C(\gamma) / 8\eps$. Moreover, because we took $\delta$ small enough, we also have $p'(x_0) w(x_0) \ge C(\gamma) / 16\eps$. This shows that no polynomial $\eps$-approximates $\sgn$ as long as $\eps < C/32M$.
\end{proof}

Moreover, the proposition shows that it is impossible to get arbitrarily close polynomial approximations to halfspaces under densities $w$ for which there are constants $C$ and $\gamma \in (0, 1)$ with $w(x) \ge C\exp(-|x|^\gamma)$ for all $x \in \R$. This shows that \LOL\ distributions on $\R$ do not support polynomial approximations to halfspaces.

\subsection{Extending the Lower Bound to Multivariate Distributions}  \label{sec:multi-lowerbound}

It is straightforward to extend the lower bound from the previous section to product distributions with \LOL\ marginals.

\begin{theorem}
Let $X = (X_1, \dots, X_n)$ be a random variable over $\R^n$ with density $f_X(x) = w(x_1)f(x_2, \dots, x_n)$. Suppose the density $w$ specifies a univariate $\gamma$-\LOL\ distribution. Then there exists an $\eps = \eps(\gamma)$ such that for any polynomial $p$,
\[\int_{\R^n} |p(x_1, \dots, x_n) - \sgn(x_1)| f_X(x_1, \dots, x_n) \ dx_1 dx_2 \dots dx_n > \eps.\]
That is, the linear threshold function $\sgn(x_1)$ cannot be approximated arbitrarily well by polynomials.
\end{theorem}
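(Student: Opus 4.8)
The plan is to reduce the multivariate statement to the univariate Proposition~\ref{prop:noapproxforsign} by slicing along the first coordinate and applying Tonelli's theorem. Write $y = (x_2, \dots, x_n) \in \R^{n-1}$, and for a fixed polynomial $p : \R^n \to \R$ let $p_y(t) := p(t, y)$ denote its restriction to the line parallel to the $x_1$-axis through $y$. Restricting a polynomial in $n$ variables to an axis-parallel line yields a univariate polynomial of degree at most $\deg p$, so each $p_y$ is a legitimate univariate polynomial, although its degree and coefficients may vary with $y$.

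First I would record the uniform univariate bound. Proposition~\ref{prop:noapproxforsign}, together with the remark immediately following its proof, gives a constant $\eps_0 = \eps_0(\gamma, C) > 0$ depending only on the $\LOL$ parameters of $w$ (and not on the polynomial) such that $\int_\R |q(t) - \sgn(t)|\, w(t)\, dt > \eps_0$ for every univariate polynomial $q$. Concretely, since $w(t) \ge C\exp(-|t|^\gamma) = (C/C(\gamma))\, w_\gamma(t)$, we have $\int_\R |q - \sgn|\, w \ge (C/C(\gamma)) \int_\R |q - \sgn|\, w_\gamma > (C/C(\gamma))\, \eps(\gamma)$, so one may take $\eps_0 = C\,\eps(\gamma)/C(\gamma)$. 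In particular this applies with $q = p_y$ for every $y$, and the inequality remains valid (trivially) even when the inner integral equals $+\infty$.

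Then I would invoke Tonelli's theorem on the nonnegative measurable integrand $|p(x) - \sgn(x_1)|\, f_X(x) = |p_y(x_1) - \sgn(x_1)|\, w(x_1)\, f(y)$, integrating over $x_1$ first:
\[
\int_{\R^n} |p(x) - \sgn(x_1)|\, f_X(x)\, dx
= \int_{\R^{n-1}} \left( \int_\R |p_y(t) - \sgn(t)|\, w(t)\, dt \right) f(y)\, dy
> \eps_0 \int_{\R^{n-1}} f(y)\, dy
= \eps_0,
\]
using that $f$ is a probability density on $\R^{n-1}$, which follows from $\int_\R w = 1$ and $\int_{\R^n} f_X = 1$. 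This establishes the theorem with $\eps = \eps_0(\gamma, C)$.

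There is essentially no genuine obstacle here beyond bookkeeping; the two points needing care are (i) that the univariate lower bound $\eps_0$ is \emph{uniform} over all polynomials $q$, which holds because $\eps(\gamma)$ in Proposition~\ref{prop:noapproxforsign} does not depend on the polynomial, and (ii) the measurability and nonnegativity required to apply Tonelli, which are immediate. The fact that the slices $p_y$ may have wildly varying degrees is harmless precisely because the univariate bound is degree-independent. Finally, the same argument yields the earlier informal theorem for general product distributions with a $\LOL$ marginal on some coordinate $i$: relabel coordinates so that $i = 1$, take the halfspace $h = \sgn(x_1)$, and note that the product density factors as $w(x_1)\, f(x_2,\dots,x_n)$ with $f = \prod_{j \ne 1} w_j$.
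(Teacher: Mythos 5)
Your proof is correct, but it takes a somewhat different route from the paper's. The paper reduces to Proposition~\ref{prop:noapproxforsign} by \emph{averaging out} the nuisance variables: it defines the single univariate polynomial $q(x_1) := \int_{\R^{n-1}} p(x_1,\dots,x_n) f(x_2,\dots,x_n)\, dx_2 \dots dx_n$, uses the triangle inequality $\left|\int (p-\sgn)f\right| \le \int |p-\sgn| f$ to lower-bound the multivariate $L_1$ error by the univariate error of $q$ against $w$, and then invokes the proposition once. You instead \emph{slice}: you fix $y=(x_2,\dots,x_n)$, apply the uniform, degree-independent univariate lower bound to every restriction $p_y(t)=p(t,y)$, and integrate over $y$ via Tonelli. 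The two arguments are close cousins --- both hinge on the product structure $f_X(x) = w(x_1) f(y)$ and on the fact that $\eps(\gamma)$ in Proposition~\ref{prop:noapproxforsign} does not depend on the polynomial --- but yours is marginally more robust: it needs only nonnegativity and measurability of the integrand, whereas the paper's averaging tacitly requires $q$ to be a well-defined polynomial (i.e., that $f$ integrates the relevant monomials of $p$), a point the paper leaves implicit and which your remark about possibly infinite inner integrals handles cleanly. You are also more explicit in tracking how the constant degrades from $\eps(\gamma)$ for the canonical density $w_\gamma$ to $C\eps(\gamma)/C(\gamma)$ for a general $\gamma$-\LOL\ density $w$, where the paper cites the proposition directly, relying on the remark following its proof. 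What the paper's version buys is the statement that the averaged polynomial is itself a poor univariate approximator (the standard symmetrization picture); what yours buys is freedom from all integrability and degree bookkeeping, since the varying degrees of the slices $p_y$ are irrelevant to the uniform bound.
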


\begin{proof}
Let $p(x_1, \dots, x_n)$ be a polynomial, and define a univariate polynomial $q$ by ``averaging out'' the variables $x_2, \dots, x_n$:
\[q(x_1) := \int_{\R^{n-1}} p(x_1, \dots, x_n) f(x_2, \dots, x_n) \ dx_2 \dots \ dx_n.\]
Then we have
\begin{align*}
\int_{\R} |q(x_1) - \sgn(x_1)| w(x_1) \ dx_1 &= \int_{\R} \left|\int_{\R^{n-1}} (p(x_1, \dots, x_n) - \sgn(x_1)) f(x_2, \dots, x_n) \ dx_2 \dots dx_n\right| w(x_1) \ dx_1 \\
&\le \int_{\R} \left(\int_{\R^{n-1}} |p(x_1, \dots, x_n) - \sgn(x_1)| f(x_2, \dots, x_n) \ dx_2 \dots dx_n \right)w(x_1) \ dx_1 \\
&= \int_{\R^n} \int_{\R^n} |p(x_1, \dots, x_n) - \sgn(x_1)| f_X(x_1, \dots, x_n) \ dx_1 dx_2 \dots dx_n.
\end{align*}
By Proposition \ref{prop:noapproxforsign}, the latter quantity must be at least $\eps(\gamma)$.
\end{proof}

Let $w_\gamma^n(x) \propto \exp(-(|x_1|^\gamma + \dots + |x_n|^\gamma))$ denote the density of the “prototypical” multivariate LSL distribution, with each marginal having the same exponential power law distribution. Our impossibility result holds uniformly for every distribution in the sequence $\{w_\gamma^n\}$. That is, for every $\gamma \in (0, 1)$, there exists $\eps = \eps(\gamma)$ for which halfspaces cannot be learned by polynomials under any of the distributions specified by $\{w_\gamma^n\}$.

As a consequence, we get inapproximability results for several natural classes of distributions that dominate $\{w_\gamma^n\}$ by constant factors (i.e. not growing with $n$). 


\begin{enumerate}
\item Any power-law distribution, i.e. a distribution with density $\propto \|x\|^{-M}$ for some constant $M$, since such a distribution dominates every $w^n_\gamma$.
\item Multivariate generalizations of the log-normal distribution, i.e. any distribution with density $\propto \exp(-\operatorname{polylog}(\|x\|))$.
\item Multivariate exponential power distributions, which have densities $\propto \exp(-\|x\|^\gamma)$ for $\gamma \in (0, 1)$. These distributions dominate the prototypical $w^n_\gamma$ by the inequality of $\ell_p$-norms:
\[\|x\|^\gamma \le |x_1|^\gamma + \dots + |x_n|^\gamma\]
for every $0 \le \gamma \le 2$.

\end{enumerate}

\section{Tail Bounds for Limited Independence}
Our proof consists of three steps:
\begin{itemize}
\item[\S \ref{sec:dual}] First we reformulate the question of tail bounds for $k$-wise independent distributions using linear programming duality and symmetrisation. This reduces the problem to proving a degree lower bound on univariate polynomials. Namely we need to give a lower bound on the degree of a polynomial $p : \{0,1, \cdots n\} \to \mathbb{R}$ such that $p(i) \geq 0$ for all $i$, $p(i) \geq 1$ if $|i-n/2|\geq T$, and $\ex{}{p(i)} \leq \delta$, where $i$ is drawn from the binomial distribution.
\item[\S \ref{sec:cts}] We then transform the problem from one about polynomials with a discrete domain to one about polynomials with a continuous domain. This amounts to showing that, since $\ex{}{p(i)} \leq \delta$ with respect to the binomial distribution, we can bound $\ex{}{p(x+n/2)}$ with respect to a truncated Gaussian distribution on $x$. 
\item[\S \ref{sec:kwiselb}] Finally we can apply the tools of weighted approximation theory. We know that $p(x+n/2)$ is small for $x$ near the origin, but $p(T+n/2) \geq 1$. We show that any low-degree polynomial that is bounded near the origin cannot grow too quickly. This implies that $p$ must have high degree.
\end{itemize}

\subsection{Dual Formulation} \label{sec:dual}
Question \ref{q:Kwise} from the introduction is equivalent to finding the smallest $k$ for which the value of the following linear program is at most $\delta$.
\begin{center}Linear Program Formulation of Question \ref{q:Kwise}\end{center}
\begin{align*}
\max_{\psi} &\sum_{x \in \{-1, 1\}^n} \psi(x) F_T(x) \\
\text{s.t.} &\sum_{x \in \{-1, 1\}^n} \psi(x) \chi_S(x) = 0 & \text{for all } |S| \le k \\
&\sum_{x \in \{-1, 1\}^n} \psi(x) = 1 \\
& 0 \le \psi(x) \le 1 & \text{for all } x \in \{-1, 1\}^n.
\end{align*}
Here, $F_T(x) = 1$ if $|x| \ge T$ and is $0$ otherwise, and $\chi_S(x)$ is the Fourier character corresponding to $S \subseteq [n]$. 

If we set $\pr{X}{X=x} = \phi(x)$, then the constraints impose that $X$ is a $k$-wise independent distribution, while the objective function is $\pr{X}{\left| \sum_{i \in [n]} X_i \right| \geq T}$. Thus the above linear program finds the $k$-wise independent distribution with the worst tail bound. If the value of the program is at most $\delta$, then all $k$-wise independent distributions satisfy the tail bound, as required.

Taking the dual of the above linear program yields the following.
\begin{center}Dual Formulation of Question \ref{q:Kwise}\end{center}
\begin{align*}
\min_{p} &\ 2^{-n}\sum_{x \in \{-1, 1\}^n} p(x)\\
\text{s.t.} &\ \text{deg}(p) \leq k\\
&\ p(x) \ge F_T(x) & \text{for all } x \in \{-1, 1\}^n.
\end{align*}
By strong duality, the value of the dual linear program is the same as that of the primal.

The multilinear polynomial $p$ as an ``upper sandwich'' of $F_T$ -- that is, $p \geq F_T$ and $\ex{X \in \{\pm 1\}^n}{p(X)}$ is minimal. Therefore, $k(n, \delta, T)$ is the smallest $k$ for which $F_T$ admits an upper sandwiching polynomial of degree $k$ with expectation $\delta$.

Consider the shifted univariate symmetrization of $F_T$
\[F_T'(x) = \begin{cases}
1 \text{ if } |x - n/2| \ge T \\
0 \text{ otherwise.} 
\end{cases}\]
By applying the well-known Minsky-Papert symmetrization \cite{MinskyPapert} to the dual formulation above, we get the following characterization.

\begin{theorem} \label{thm:discrete-char}
The quantity $k(n, \delta, T)$ from Question \ref{q:Kwise} is the smallest $k$ for which there exists a degree-$k$ univariate polynomial $p : \{0, \dots, n\} \to \R$ such that
\begin{enumerate}
\item $p(i) \ge F_T'(i)$ for all $0 \le i \le n$ and
\item $2^{-n}\sum_{i = 0}^n {n \choose i} p(i) \le \delta$.
\end{enumerate}
\end{theorem}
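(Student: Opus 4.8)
## Proof Proposal for Theorem \ref{thm:discrete-char}

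The plan is to start from the dual linear program already derived above, whose value equals $k(n,\delta,T)$ by strong duality, and apply Minsky–Papert symmetrization to collapse the multilinear polynomial $p$ on $\{-1,1\}^n$ to a univariate polynomial on $\{0,\dots,n\}$. First I would recall the symmetrization operator: given any multilinear $p : \{-1,1\}^n \to \R$ of degree at most $k$, define $p^{\mathrm{sym}}(x) = \E_{\sigma}[p(x_{\sigma(1)},\dots,x_{\sigma(n)})]$, the average over all coordinate permutations $\sigma \in S_n$. The standard Minsky–Papert fact is that $p^{\mathrm{sym}}$ depends only on the Hamming weight of its input and agrees with a univariate polynomial $\tilde p$ of degree at most $k$ in the variable $i = $ (number of $-1$ coordinates), or equivalently in $\sum_j x_j$; here it is cleanest to index by $i \in \{0,\dots,n\}$ so that $\sum_j x_j = n - 2i$, and then $|x| = |\sum_j x_j| = |n-2i|$, so $F_T(x) \ge T$ iff $|n-2i| \ge T$, i.e. $|i - n/2| \ge T/2$. (One must be slightly careful about the factor of $2$ in the threshold: matching the statement of the theorem, which writes $F_T'(i) = \mathbbm 1[|i - n/2| \ge T]$, amounts to a harmless reparametrization of $T$, or one absorbs the factor $2$ into the constant $c$ in Theorem \ref{thm:KwiseLowerBound}; I would flag this but not belabor it.)

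The key steps, in order: (1) Symmetrization preserves feasibility. If $p \ge F_T$ pointwise on $\{-1,1\}^n$, then since $F_T$ is itself symmetric, averaging over permutations gives $p^{\mathrm{sym}} \ge F_T$ pointwise as well, hence the induced univariate $\tilde p$ satisfies $\tilde p(i) \ge F_T'(i)$ for all $i$ (condition 1). (2) Symmetrization preserves the objective. Because the uniform distribution on $\{-1,1\}^n$ is permutation-invariant, $2^{-n}\sum_x p(x) = 2^{-n}\sum_x p^{\mathrm{sym}}(x) = \E_{i \sim \mathrm{Bin}(n,1/2)}[\tilde p(i)] = 2^{-n}\sum_{i=0}^n \binom{n}{i}\tilde p(i)$, so the objective value is unchanged and equals the quantity in condition 2. (3) Symmetrization does not increase degree, so $\deg \tilde p \le k$. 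Together these show that any degree-$k$ feasible solution to the dual LP yields a degree-$k$ univariate polynomial satisfying conditions 1 and 2 with the same objective value; conversely, any univariate $\tilde p$ satisfying 1 and 2 lifts back to the multilinear $p(x) := \tilde p\big(\tfrac{n - \sum_j x_j}{2}\big)$, which is multilinear of degree $\le k$ on the cube (a univariate polynomial of degree $k$ in $\sum x_j$ restricted to the cube becomes multilinear of degree $\le k$), is feasible for the dual LP, and has the same objective. Hence the two optimization problems have the same optimal degree, which is $k(n,\delta,T)$.

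The main obstacle — really the only nontrivial point — is verifying cleanly that a univariate polynomial of degree $k$ evaluated at $\sum_j x_j$ (or at $i = (n-\sum_j x_j)/2$) becomes, after multilinear reduction on $\{-1,1\}^n$, a multilinear polynomial of degree \emph{at most} $k$, so that the ``lifting'' direction stays within the feasible set of the dual LP; this is the standard converse half of Minsky–Papert and follows because $(\sum_j x_j)^m$ reduces modulo $x_j^2 = 1$ to a multilinear polynomial whose monomials all have degree $\le m$ and of the same parity as $m$. I would also double-check the bookkeeping relating $F_T$ on the cube to $F_T'$ on $\{0,\dots,n\}$, since this is exactly where the factor-of-$2$ scaling of $T$ enters and where an off-by-constant error would propagate into Theorem \ref{thm:KwiseLowerBound}. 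Everything else is routine: invariance of counting measure under $S_n$, and linearity of the averaging operator.
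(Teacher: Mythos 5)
Your proposal is correct and follows essentially the same route as the paper, which obtains the characterization by taking the dual of the $k$-wise independence LP (with strong duality) and then invoking Minsky--Papert symmetrization; you simply spell out the standard details (feasibility, objective, and degree preservation under symmetrization, plus the multilinear lifting in the converse direction) that the paper leaves implicit. Your flag about the factor-of-$2$ rescaling between $|\sum_j x_j|\ge T$ and $|i-n/2|\ge T$ is apt, and as you note it is harmless since it only shifts constants absorbed into $c$ in Theorem \ref{thm:KwiseLowerBound}.
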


The upper bound on $k(n,\delta,T)$ (Theorem \ref{thm:KwiseUpperBound}) is proved (in the appendix) by showing that $$p(i) = \left(\frac{i-n/2}{T}\right)^k$$
satisfies the requirements of Theorem \ref{thm:discrete-char} for an appropriate even $k$.\footnote{While our results show that this polynomial is \emph{asymptotically} optimal, numerical experiments have shown that it is not exactly optimal.} So this characterisation does in fact capture how upper bounds are proved. The fact that it is a tight characterisation allows us to prove that a barrier to the technique is in fact an impossibility result.


With this characterisation of our problem, we may move on to proving inapproximability results.

\subsection{A Continuous Version} \label{sec:cts}

To apply techniques from the theory of weighted polynomial approximations, we move to polynomials on a continuous domain. We replace the binomial distribution upon which Theorem \ref{thm:discrete-char} evaluates $p$ with a Gaussian distribution.

Define the probability density function
\[w(x) = \frac{1}{\sqrt{\pi}}e^{-x^2}.\]
We define the $L_\infty$ norm with respect to the weight $w$:
\[\|g\|_{L_\infty(S)} = \sup_{x \in S} |g(x)|w(x).\]

Now we can give the continuous version of the problem:

\begin{theorem} \label{thm:approximation}
Let $T = c \sqrt{n \log(1/\delta)}$ for $c \ge 5$, and $d = k(n, \delta, T)$. Assume $n \geq (12c)^2 (3 \log(1/\delta))^3$. Then for $T' = 4cT/\sqrt{n}$, there is a degree $d$ polynomial $q$ such that
\begin{enumerate}
\item $q(T') = q(-T') \ge 1$ and
\item $\|q\|_{L_\infty[-\sqrt{d}, \sqrt{d}]} \le \delta^{0.9}(n+1)$.
\end{enumerate}
\end{theorem}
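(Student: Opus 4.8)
The plan is to convert the discrete upper-sandwiching polynomial provided by Theorem~\ref{thm:discrete-char} into a continuous statement by re-parametrizing the domain. Starting from $d = k(n,\delta,T)$, Theorem~\ref{thm:discrete-char} hands us a degree-$d$ polynomial $p : \{0,\dots,n\} \to \R$ with $p(i) \ge F_T'(i)$ for all $i$ and $2^{-n}\sum_i \binom{n}{i} p(i) \le \delta$. The natural substitution is $q(x) := p\left(\tfrac{n}{2} + x\sqrt{n}/2\right)$ (or some similar affine rescaling chosen so that the binomial's standard deviation $\tfrac{\sqrt{n}}{2}$ maps to a unit scale matching the Gaussian $w(x) = \tfrac{1}{\sqrt\pi}e^{-x^2}$), which is again a degree-$d$ polynomial. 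Under this substitution, the point $i = n/2 + T$ where $F_T'$ jumps to $1$ maps to $x = 2T/\sqrt{n}$; the factor-of-$2$ discrepancy between this and the claimed $T' = 4cT/\sqrt{n}$ suggests the rescaling is actually $i = n/2 + x\sqrt{n}/4$ or that $T'$ is padded to absorb slack — in any case property (1), $q(T') = q(-T') \ge 1$, is immediate from $p(i) \ge F_T'(i) = 1$ together with the symmetry of $F_T'$ about $n/2$ (and one should symmetrize $p$ first, or note $F_T'$ is already symmetric so WLOG $p$ is too).

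The substantive part is property (2): bounding $\|q\|_{L_\infty[-\sqrt{d},\sqrt{d}]} = \sup_{|x|\le\sqrt d} |q(x)| w(x)$ by $\delta^{0.9}(n+1)$. First I would establish a pointwise bound on $|q(x)|w(x)$ in terms of the expectation constraint. The key observation is a quantitative comparison between the Gaussian density $w$ and (a scaled version of) the binomial probability mass function: for $i$ in the bulk, $2^{-n}\binom{n}{i} \approx \sqrt{2/(\pi n)}\,\exp(-2(i-n/2)^2/n)$ by Stirling, which under the substitution matches $\tfrac{2}{\sqrt{n}} w(x)$ up to constants. Concretely, for each real $x$ with $|x| \le \sqrt d$, the nearest integer $i = \lfloor n/2 + x\sqrt{n}/2 \rceil$ lies well within the binomial's support (this uses the hypothesis $n \ge (12c)^2(3\log(1/\delta))^3$, which guarantees $\sqrt d \ll \sqrt n$ since $d = k(n,\delta,T)$ should be polylogarithmic), so $2^{-n}\binom{n}{i} \ge \Omega(w(x)/\sqrt n)$. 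Since a single term of the sum $2^{-n}\sum_i \binom{n}{i}p(i)$ is bounded by the total $\delta$ when $p(i) \ge 0$, we get $p(i) \le \delta \cdot 2^n/\binom{n}{i} \le O(\delta \sqrt{n}/w(x))$, hence $|q(x)|w(x) = |p(i')|w(x)$ — but here I need $q(x)$ at the real point $x$, not $p$ at the nearest integer, so the argument must instead go through a polynomial-interpolation / smoothness estimate, OR more cleanly through an $L_\infty$-norm-on-discrete-set versus $L_\infty$-norm-on-interval inequality for low-degree polynomials (a Markov/Bernstein-type or Coppersmith–Rivlin-type bound), translating control at the integer grid points into control on the whole interval $[-\sqrt d, \sqrt d]$ at the cost of a factor like $(n+1)$.

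Actually the likely cleanest route, matching the $(n+1)$ factor in the bound, is: the weighted $L_\infty$ norm of a degree-$d$ polynomial over a set of $n+1$ points controls its weighted norm over the interval up to a factor related to the number of points; alternatively, one writes $q$ via Lagrange interpolation on $\{0,1,\dots,n\}$ and bounds $\sum |\ell_i(x)| \le n+1$ times the max of $|p(i)|w(x)/w(\text{something})$. The heart of the estimate is showing $\max_i |p(i)|\,w(x_i) \le \delta^{0.9}$ where $x_i$ is the image of $i$: for $i$ in the bulk this follows from $p(i) \le \delta 2^n/\binom n i$ and the Stirling comparison as above, with the exponent dropping from $1$ to $0.9$ to absorb the $\poly(n,\log(1/\delta))$ Stirling factors (using $n$ large); for $i$ near the edges of $\{0,\dots,n\}$, the Gaussian weight $w(x_i)$ is super-polynomially small (like $e^{-\Omega(n)}$) which crushes any polynomially-bounded $p(i)$ — but one must first bound $|p(i)|$ at edge points, which again needs a Markov-type inequality bounding $p$ on all of $\{0,\dots,n\}$ in terms of its behavior on the bulk. \textbf{The main obstacle} I anticipate is precisely this: controlling $p$ (hence $q$) outside the binomial's effective support — at grid points $i$ near $0$ or $n$, and at real points $x$ between grid points — where we have no direct handle from the expectation constraint, and must invoke an ``infinite-finite range''–style inequality (as the introduction foreshadows) to bootstrap bulk control into global control, while carefully tracking that all the lost polynomial factors fit inside the gap between $\delta^{0.9}(n+1)$ and the $\delta$ we start with. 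Getting the hypothesis $n \ge (12c)^2(3\log(1/\delta))^3$ to be exactly what's needed for these Stirling and range estimates to close is the bookkeeping crux.
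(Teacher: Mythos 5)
Your skeleton is the same as the paper's: take $p$ from Theorem~\ref{thm:discrete-char}, rescale it to a univariate $q$ (the correct substitution is $q(x)=p(x\sqrt{n}/4c+n/2)$, which maps the jump points $n/2\pm T$ exactly to $\pm T'$ -- no padding or slack is involved), bound $p$ at individual bulk grid points by combining $p\ge 0$ with the expectation constraint (so a single term gives $p(i)\le \delta\,2^n/\binom{n}{i}$), control $\binom{n}{i}$ by an entropy/Stirling estimate, and then pass from the grid to the interval. However, as written your proposal has a genuine gap at exactly the place you flag as the ``main obstacle,'' and the obstacle you name is not the real one. The norm in item (2) is taken only over $[-\sqrt{d},\sqrt{d}]$, whose preimage under the rescaling is the bulk window $n/2\pm\sqrt{nd}/4c$; you never need any control of $p$ at grid points near $0$ or $n$, and no infinite--finite range inequality is needed in this step (that tool enters only later, in the proof of Theorem~\ref{thm:deg-lb}). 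What you do need, and leave unresolved, is the grid-to-interval step \emph{on that window}: the paper uses the Ehlich--Zeller/Rivlin--Cheney bound (Lemma~\ref{lem:Continous}), which says a degree-$d$ polynomial bounded at $m+1$ consecutive integers with $m\ge 3d^2$ is bounded by a factor $3/2$ more on the whole segment. The hypothesis $n\ge(12c)^2(3\log(1/\delta))^3$, combined with the degree upper bound $d\le 3\log(1/\delta)$ from Theorem~\ref{thm:KwiseUpperBound}, is precisely what guarantees the window contains at least $3d^2$ integers; this is the role of that assumption, not Stirling bookkeeping. Your alternative route via Lagrange interpolation on $\{0,\dots,n\}$ cannot be patched: it needs the unavailable edge values, and for equispaced nodes the Lebesgue constant is exponentially large in the number of nodes, so it would not yield anything like an $(n+1)$ factor.

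Two smaller misattributions are worth correcting. The factor $(n+1)$ in the bound comes from the binomial-coefficient lower bound $\binom{n}{n/2+\alpha\sqrt{n}}\ge 2^{\,n-6\alpha^2}/(n+1)$ (Lemma~\ref{lem:BinomialBound}) applied with $\alpha\le\sqrt{d}/4c$, not from the interpolation step, and the drop from $\delta$ to $\delta^{0.9}$ absorbs only the resulting $2^{6d/16c^2}\le\delta^{-18/16c^2}$ factor (using $d\le 3\log(1/\delta)$ and $c\ge 5$), not any $\poly(n)$ losses. Also, no Gaussian-versus-binomial density comparison is needed: on $[-\sqrt{d},\sqrt{d}]$ one simply bounds the weight by $w(x)\le 1/\sqrt{\pi}$, which converts the unweighted bound $\tfrac{3}{2}(n+1)\delta^{0.9}$ into the claimed weighted bound $(n+1)\delta^{0.9}$.
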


The following lemma is key to moving from the discrete to the continous setting. It shows that if a polynomial is bounded at evenly spaced points, then it must also be bounded between those points, assuming the number of points is sufficiently large relative to the degree.

\begin{lemma} \cite{EhlichZe64, RivlinCh66, NisanSz94} \label{lem:Continous}
Let $q$ be a polynomial of degree $d$ such that $|q(i)| \le 1$ for $i = 0, 1, \dots, m$, where $3d^2 \le m$. Then $|q(x)| \le \frac{3}{2}$ for all $x \in [0, m]$.
\end{lemma}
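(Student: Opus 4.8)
\textbf{Proof plan for Lemma~\ref{lem:Continous}.} The statement is the classical Ehlich--Zeller / Rivlin--Cheney bound, and I would prove it via Lagrange interpolation combined with a Chebyshev-type estimate on the sampling grid. Set up: let $q$ have degree $d$ with $|q(i)| \le 1$ for $i = 0, 1, \dots, m$ where $m \ge 3d^2$. Fix an arbitrary $x \in [0, m]$ at which we want to bound $|q(x)|$. Since $q$ has degree $d \le m$, it is determined by its values at any $d+1$ of the sample points, but the cleaner route is to use the full interpolation formula on a well-chosen subset, or equivalently to bound the ``Lebesgue function'' of the equally spaced grid restricted to the relevant scale. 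The key point is that we do not need to control the Lebesgue constant of all $m+1$ points (which is exponentially large); instead, because $q$ has low degree, we can localize.

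\textbf{Main steps.} First, reduce to a window of size $\Theta(d^2)$ around $x$: let $j_0$ be the integer nearest to $x$, and consider the $2\ell+1$ consecutive grid points $j_0 - \ell, \dots, j_0 + \ell$ for $\ell = \lceil 3d^2/2 \rceil$ or so, chosen so that this window fits inside $[0,m]$ (this is where $m \ge 3d^2$ is used — we can always slide a window of this width to contain $x$). Second, write $q(x)$ via the Lagrange interpolation formula through $d+1$ of these points — actually the slicker classical argument interpolates through ALL points of the window and then uses that $q$ restricted to a quadratic rescaling of the window looks like a bounded function on $[-1,1]$ sampled at $\cos(k\pi/N)$-spaced-in-arcsin points. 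Concretely: after an affine change of variable mapping the window to $[-1,1]$, the grid points map to roughly equally spaced points $t_k$ with spacing $\approx 1/\ell$; the product $\prod_k |x - i_k|$ appearing in the interpolation error/weights is controlled because $\ell \gg d$ means the nearest grid point is within distance $1$, and the remaining factors telescope into something comparable to $\ell!$-type quantities that cancel against the denominator. The upshot of the standard computation is the bound $|q(x)| \le 1 + O(d^2/\ell) \le \tfrac32$ once $\ell \ge 3d^2$. Third, verify the constant: the threshold $m \ge 3d^2$ and the resulting $\tfrac32$ are exactly what falls out of being slightly careful with the $\sum 1/k$ logarithmic factors versus the linear-in-$d^2$ window width; I would cite \cite{EhlichZe64, RivlinCh66} for the sharp constant rather than re-deriving the optimal one, and note that \cite{NisanSz94} give a self-contained exposition.

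\textbf{The main obstacle.} The delicate part is not the structure of the argument but extracting the clean constant $\tfrac32$ from $m \ge 3d^2$: a naive Lagrange-interpolation bound on $m+1$ equally spaced points gives a Lebesgue constant of order $2^m/m$, which is useless, so the entire subtlety is the \emph{localization} — showing that a degree-$d$ polynomial bounded on a long equally spaced grid is already controlled by a short sub-grid of length $\Theta(d^2)$, and that on that scale the Lebesgue function is bounded by a small absolute constant. This is essentially a statement that equally spaced nodes behave like Chebyshev nodes once you zoom in to the $1/d^2$ scale (the natural ``resolution'' of a degree-$d$ polynomial), and making the error term genuinely at most $\tfrac12$ rather than merely $O(1)$ requires the precise accounting that \cite{EhlichZe64} and \cite{RivlinCh66} carry out. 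Since the excerpt invokes the lemma only as a black box with these exact constants, I would present the localization reduction in detail and then defer the sharp constant to those references.
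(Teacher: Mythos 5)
The paper proves this lemma with a short self-contained bootstrap, not via Lagrange interpolation: set $a = \max_{x\in[0,m]}|q'(x)|$. Since every $x \in [0,m]$ lies within distance $1/2$ of an integer sample point, the mean value theorem gives $\max_{[0,m]}|q| \le 1 + a/2$. Markov's inequality, rescaled to $[0,m]$, then gives $a \le \frac{2d^2}{m}\max_{[0,m]}|q| \le \frac{2d^2}{m}\bigl(1 + a/2\bigr)$, which rearranges to $\frac{a}{2+a} \le \frac{d^2}{m} \le \frac13$, hence $a \le 1$ and $\max_{[0,m]}|q| \le \frac32$. No interpolation, no Lebesgue function, no localization.

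Your route is the historical Ehlich--Zeller / Rivlin--Cheney one, and it is a genuinely different argument, but it has two costs that the paper's proof avoids. First, as you yourself flag, extracting the exact constant $3/2$ from $m \ge 3d^2$ via Lagrange interpolation requires careful accounting of the node weights, and you ultimately defer that computation to the references --- so the proposal as written is incomplete precisely on the quantitative claim the lemma is making. Second, there is a conceptual wrinkle in the way you describe the localization: since $q$ has degree $d$, Lagrange interpolation through \emph{any} $d+1$ nodes of the window already reproduces $q$ exactly; interpolating ``through all $\Theta(d^2)$ points of the window'' does not directly yield $q$ (the interpolant through $N$ points has degree $N-1$ in general), so you need to choose a $(d+1)$-node subset and bound that Lebesgue function, or use some averaging over subsets, and your sketch conflates these. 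The paper's Markov-plus-mean-value bootstrap is the cleaner elementary proof here: it is three lines, gives the exact constants with no case analysis, and is worth internalizing as the standard way to pass from a grid bound to an interval bound for low-degree polynomials.
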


\begin{proof}
Let $a = \max_{x \in [0, m]} |q'(x)|$. Then by the mean value theorem, $|q(x)| \le 1 + a/2$ for $x \in [0, m]$. By Markov's inequality (\cite{Markov90}, see also \cite{Cheney82}),
\[a \le \frac{2d^2 (1 + a/2)}{m}.\]
Rearranging gives
\[\frac{a}{2 + a} \le \frac{d^2}{m} \le \frac{1}{3}.\]
Therefore, $a \le 1$, and hence $|q(x)| \le \frac{3}{2}$ for $x \in [0, m]$.
\end{proof}

We also require the following anti-concentration lemma.

\begin{lemma} \label{lem:BinomialBound}
\[{n \choose n/2 + \alpha \sqrt{n}} \ge \frac{2^{n - 6\alpha^2}}{n + 1}.\]
\end{lemma}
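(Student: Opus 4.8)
The plan is to prove the binomial lower bound $\binom{n}{n/2+\alpha\sqrt n}\ge 2^{n-6\alpha^2}/(n+1)$ by a clean combination of two standard facts: the total mass $\sum_i \binom{n}{i}=2^n$ spread over $n+1$ terms, together with a control on how fast the ratio of consecutive binomial coefficients drops as we move away from the center. Concretely, write $a_j=\binom{n}{n/2+j}$ (treating $n$ as even for the cleanest bookkeeping; the odd case is handled by an entirely analogous shift). Then $a_{j}/a_{j-1}=(n/2-j+1)/(n/2+j)$, and I would bound $\log(a_{j-1}/a_j)$ from above so as to telescope the product $a_0/a_m$ for $m=\lceil\alpha\sqrt n\rceil$.

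The key estimate is $\log\!\frac{a_{j-1}}{a_j}=\log\!\frac{n/2+j}{n/2-j+1}\le \frac{2j-1+\text{(lower order)}}{n/2}\le \frac{4j}{n}$ for the relevant range of $j$ (this uses $\log(1+u)\le u$ and the fact that $j$ is at most $O(\sqrt n)\ll n$, so the denominator $n/2-j+1$ is within a constant factor of $n/2$; pushing the constants a little gives the clean $4j/n$, or one can be slightly more careful to land exactly at what is needed). Summing over $j=1,\dots,m$ gives $\log(a_0/a_m)\le \frac{4}{n}\cdot\frac{m(m+1)}{2}\le \frac{2m(m+1)}{n}\le 6\alpha^2$ once we observe $m\le \alpha\sqrt n+1$ and absorb the lower-order term into the constant (this is exactly where the slack between the true constant $\sim 2$ and the stated $6$ is spent). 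Hence $a_m\ge a_0\,2^{-6\alpha^2}$. Finally, since $a_0=\binom{n}{n/2}=\max_i\binom{n}{i}\ge 2^n/(n+1)$ (the central coefficient is at least the average of all $n+1$ coefficients), we conclude $a_m\ge 2^{n-6\alpha^2}/(n+1)$, as desired. Monotonicity of $\binom{n}{n/2+j}$ in $|j|$ also lets us replace $\lceil\alpha\sqrt n\rceil$ by the possibly non-integer $\alpha\sqrt n$ in the statement by rounding down.

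The main obstacle is purely one of constant-wrangling: making sure the telescoped sum of logarithms actually fits under $6\alpha^2$ uniformly in $n$ and $\alpha$, including the regime where $\alpha\sqrt n$ is not an integer and where $\alpha$ is comparable to $\sqrt n$ (so that $n/2+j$ could be close to $n$ and the ratio estimate degrades). I would handle the large-$\alpha$ case separately — if $\alpha\gtrsim \sqrt n$ the right-hand side $2^{n-6\alpha^2}/(n+1)$ is at most $1$ or even vacuous, so the inequality holds trivially (a binomial coefficient is a nonnegative integer, and in fact $\ge 1$ whenever the index is in $\{0,\dots,n\}$) — and reserve the telescoping argument for $\alpha=O(\sqrt n)$, where the bound $n/2-j+1\ge n/4$ (say) is comfortably valid. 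The only mildly delicate point is checking that the $+1$ from $\lceil\cdot\rceil$ and the $\log(1+u)\le u$ slack together stay within the generous factor of $3$ that separates $2\alpha^2$ from $6\alpha^2$; this is routine but worth doing carefully.
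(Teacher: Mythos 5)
Your argument is correct in substance but takes a genuinely different route from the paper. The paper invokes the entropy lower bound $\binom{n}{k}\ge 2^{nH(k/n)}/(n+1)$ and then lower-bounds $H\bigl(\tfrac12+\tfrac{\alpha}{\sqrt n}\bigr)\ge 1-\tfrac{4\alpha^2}{(\log 2)n}$ by a one-line tangent estimate $\log_2(1+u)\le u/\log 2$ applied to each entropy term, which immediately gives the exponent $n-4\alpha^2/\log 2\ge n-6\alpha^2$. You instead telescope the ratios $a_{j-1}/a_j=\frac{n/2+j}{n/2-j+1}$ of consecutive binomial coefficients and only use the central bound $\binom{n}{n/2}\ge 2^n/(n+1)$; this avoids entropy altogether and, in the regime the paper actually uses the lemma ($\alpha=\sqrt d/4c\ll\sqrt n$), yields an even better constant ($\approx 2/\log 2$ in place of $4/\log 2$ in the exponent). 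The price is exactly the bookkeeping you flag: the clean bound $\log\frac{a_{j-1}}{a_j}\le \frac{4j}{n}$ requires $4j(j-1)\le n$, so for larger $\alpha$ you must settle for a degraded denominator bound such as $n/2-j+1\ge n/4$ (or a sharper $\operatorname{artanh}$-type estimate), and your ``trivial'' regime only begins around $6\alpha^2\ge n-\log_2(n+1)$, i.e.\ $\alpha\approx\sqrt{n/6}$, not $\alpha\gtrsim\sqrt n$, so there is a middle window roughly $\sqrt n/4\le\alpha\lesssim 0.41\sqrt n$ where the telescoping constants get tight and need the more careful estimate; also, as with the paper's own statement, the index $n/2+\alpha\sqrt n$ must be a valid integer for the claim to make sense. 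These are fixable constant-level issues rather than gaps in the idea, and in the small-$\alpha$ regime relevant to Theorem \ref{thm:approximation} your version goes through with room to spare. The paper's entropy route buys brevity and uniformity in $\alpha$; yours buys elementarity and a slightly sharper constant where it matters.
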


\begin{proof}
It is well known via Stirling's approximation that ${n \choose k} \ge 2^{nH(k/n)}/(n+1)$, where $H( \cdot )$ denotes the binary entropy function. We estimate
\begin{align*}
H\left(\frac{1}{2} + \frac{\alpha}{\sqrt{n}}\right) &\ge \left(\frac{1}{2} + \frac{\alpha}{\sqrt{n}}\right)\left(1 - \frac{2\alpha}{(\log 2)\sqrt{n}}\right) + \left(\frac{1}{2} - \frac{\alpha}{\sqrt{n}}\right)\left(1 + \frac{2\alpha}{(\log 2)\sqrt{n}}\right) \\
&\ge 1 - \frac{4\alpha^2}{(\log 2) n},
\end{align*}
which concludes the proof.
\end{proof}

\begin{proof}[Proof of Theorem \ref{thm:approximation}]
Let $p$ be the polynomial promised by Theorem \ref{thm:discrete-char}. By Theorem \ref{thm:KwiseUpperBound}, we know that  $d \leq 3 \log(1/\delta)$. Define
\[q(x) = p(x\sqrt{n}/4c+n/2).\]
Then $q(\pm T') = p(\pm T+n/2) \ge F_T(\pm T + n/2)= 1$, dispensing with the first claim.

Now for all integers $i$ in the interval $n/2 \pm \sqrt{nd}/4c$, we have
\[2^{-n} {n \choose i} |p(i)| \le \delta\]
and hence, by Lemma \ref{lem:BinomialBound},
\[|p(i)| \le \frac{2^n \delta}{ {n\choose n/2 + \sqrt{nd}/4c} } \le (n+1)\delta 2^{6d/16c^2} \leq (n+1) \delta^{1 - 18/16c^2} \leq (n+1) \delta^{0.9}. \]
By Lemma \ref{lem:Continous}, $|p(x)| \le \frac{3}{2}(n+1)\delta^{0.9}$ on the whole interval $n/2 \pm \sqrt{nd}/4c$. Thus $|q(x)| \le \frac{3}{2}(n+1)\delta^{0.9}$ on $[-\sqrt{d}, \sqrt{d}]$, completing the proof.
\end{proof}

\subsection{The Lower Bound} \label{sec:kwiselb}

Now we state the result we need from approximation theory. The following ``infinite-finite range inequality'' shows that the norm of weighted polynomial on the real line is determined by its norm on a finite interval around the origin. Thus, an upper bound on the magnitude of a polynomial near the origin yields a bound on its growth away from the origin.. We will apply this to the polynomial given to us in Theorem \ref{thm:approximation}.

\begin{theorem} \label{thm:homegrown-infinite-finite}
For any polynomial $p$ of degree $d$ and $B > 1$,
\[\|p\|_{L_\infty(\R \setminus [-B\sqrt{d}, B\sqrt{d}])} \le (2eB)^d \exp(-B^2d)\|p\|_{L_\infty[-\sqrt{d}, \sqrt{d}]}.\]
\end{theorem}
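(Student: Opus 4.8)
The plan is to prove the infinite-finite range inequality by a direct argument: bound the weighted supremum of $p$ at any single point $x$ with $|x| \ge B\sqrt{d}$ in terms of $\|p\|_{L_\infty[-\sqrt{d},\sqrt{d}]}$, using a good interpolation-type or extremal-polynomial estimate together with the rapid decay of the weight $w(x) = \pi^{-1/2}e^{-x^2}$. The key point is that the Gaussian weight $w(x)$ decays like $e^{-x^2}$, so moving from radius $\sqrt{d}$ out to radius $B\sqrt{d}$ buys a factor $e^{-B^2 d}\cdot e^{d} = e^{-(B^2-1)d}$ in the weight, while a polynomial of degree $d$ can only grow by a controlled amount — something like $(\text{const}\cdot B)^d$ — when its argument is scaled up by a factor $B$. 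Balancing these two competing exponentials in $d$ is exactly what yields the stated bound $(2eB)^d e^{-B^2 d}$.

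Concretely, I would first reduce to estimating $|p(x)|w(x)$ for a fixed $x$ with $|x| = r \ge B\sqrt{d}$. Write $M := \|p\|_{L_\infty[-\sqrt{d},\sqrt{d}]}$, so that $|p(t)| \le M/w(t) = M\sqrt{\pi}\,e^{t^2} \le M\sqrt{\pi}\,e^{d}$ for all $t \in [-\sqrt{d},\sqrt{d}]$. Next I would invoke a growth bound for a degree-$d$ polynomial bounded on $[-\sqrt{d},\sqrt{d}]$: by the standard Chebyshev-polynomial estimate, a polynomial of degree $d$ with $|p| \le A$ on an interval of half-length $a$ satisfies $|p(x)| \le A \cdot |T_d(x/a)|$ outside the interval, and $|T_d(y)| \le (2|y|)^d$ for $|y| \ge 1$ (indeed $|T_d(y)| = \tfrac12|(y+\sqrt{y^2-1})^d + (y-\sqrt{y^2-1})^d| \le (2|y|)^d$). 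Applying this with $a = \sqrt{d}$, $A = M\sqrt\pi\,e^d$, and $|y| = r/\sqrt d \ge B$ gives $|p(x)| \le M\sqrt\pi\, e^d (2r/\sqrt d)^d$. Multiplying by $w(x) = \pi^{-1/2} e^{-r^2}$ yields $|p(x)|w(x) \le M e^{d}(2r/\sqrt d)^d e^{-r^2}$, and it remains to check that the function $r \mapsto (2r/\sqrt d)^d e^{-r^2}$, for $r \ge B\sqrt d$, is maximized at the left endpoint $r = B\sqrt d$ (its logarithmic derivative $d/r - 2r$ is negative once $r \ge \sqrt{d/2}$, which holds since $B > 1$). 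Plugging $r = B\sqrt d$ gives the factor $(2B)^d e^{d} e^{-B^2 d} = (2eB)^d e^{-B^2 d}$ times $M$, which is exactly the claimed inequality (with a small slack, since the endpoint is the worst case).

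The step I expect to be the main obstacle is pinning down precisely which extremal/growth inequality to cite and getting the constant in the Chebyshev bound to land at $2eB$ rather than something slightly larger; the naive chain above produces the constant $2e$ cleanly, but one must be careful that the bound $|T_d(y)| \le (2|y|)^d$ is applied correctly and that the worst-case $r$ really is the inner endpoint (this uses $B > 1$, explaining that hypothesis). A secondary subtlety is that the theorem is stated with the $L_\infty(w)$ norm, so I should make sure the supremum over the exterior region $\R \setminus [-B\sqrt d, B\sqrt d]$ is genuinely attained (or approached) and that the pointwise estimate above, being uniform in $x$ in that region, suffices — it does, since the final bound does not depend on $r$ after taking the endpoint. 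Everything else (the reduction, the weight bookkeeping, the single-variable optimization) is routine.
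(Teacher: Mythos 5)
Your proposal is correct and matches the paper's proof essentially step for step: both use the Chebyshev extremal growth bound $|T_d(y)|\le(2|y|)^d$ after rescaling to $[-\sqrt d,\sqrt d]$, convert the weighted norm to an unweighted bound via $e^{t^2}\le e^d$ on that interval, multiply by the Gaussian weight outside, and handle all exterior points by noting the resulting coefficient is decreasing in the radius (your log-derivative check $d/r-2r<0$ is exactly the paper's remark that $(2eB)^de^{-B^2d}$ is decreasing in $B$). No gaps; the constants land at $(2eB)^d e^{-B^2 d}$ just as claimed.
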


The proof follows \cite[Theorem 6.1]{Lubinsky07} and \cite[Theorem 4.16.12]{Nevai86}.
\begin{proof}
Let $\tilde{p}$ be a polynomial of degree $d$. Let $T_d(x)$ denote the $d$th Chebyshev polynomial of the first kind \cite{Cheney82}. By the extremal properties of $T_d$, we have
\[|\tilde{p}(x)| \le |T_d(x)| \left(\max_{t \in [-1, 1]}|\tilde{p}(t)|\right) \le (2|x|)^d \left(\max_{t \in [-1, 1]}|\tilde{p}(t)|\right) \]
for $|x| \ge 1$. Rescaling $p(x)=\tilde{p}(x/\sqrt{d})$ yields
\[|p(x)| \le \left(\frac{2|x|}{\sqrt{d}}\right)^d \left(\max_{t \in [-\sqrt{d}, \sqrt{d}]} |p(t)|\right) \le \sqrt{\pi}e^d\left(\frac{2|x|}{\sqrt{d}}\right)^d \|p\|_{L_\infty[-\sqrt{d}, \sqrt{d}]}\]
for $|x| \ge \sqrt{d}$. Now let $|x| = B\sqrt{d}$ for some $B > 1$. Then
\[|p(x)|w(x) \le e^d(2B)^d \exp(-B^2d)\|p\|_{L_\infty[-\sqrt{d}, \sqrt{d}]}.\]
Since the coefficient $(2eB)^d\exp(-B^2d)$ is decreasing in $B$, this proves the claim.
\end{proof}

The above approximation theory result, combined with our continuous formulation Theorem \ref{thm:approximation}, enables us to complete the proof.

\begin{theorem} \label{thm:deg-lb}
Let $T = c \sqrt{n \log(1/\delta)}$ for $c \ge 5$. Assume $n \geq (12c)^2 (3 \log(1/\delta))^3$.  Then $k(n,\delta,T) > \log(1/\delta)/9\log c$.
\end{theorem}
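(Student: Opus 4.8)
The plan is to combine Theorem~\ref{thm:approximation} with the infinite-finite range inequality (Theorem~\ref{thm:homegrown-infinite-finite}) and derive a contradiction if $d = k(n,\delta,T)$ is too small. Suppose for contradiction that $d \le \log(1/\delta)/9\log c$. We already know from Theorem~\ref{thm:KwiseUpperBound} that $d \le 3\log(1/\delta)$, so the hypotheses of Theorem~\ref{thm:approximation} are in force, and it hands us a degree-$d$ polynomial $q$ with $q(\pm T') \ge 1$ (where $T' = 4cT/\sqrt n$) and $\|q\|_{L_\infty[-\sqrt d,\sqrt d]} \le \delta^{0.9}(n+1)$. The idea is that $q$ is small near the origin but has value at least $1$ at $T'$, which is far out (in units of $\sqrt d$); the infinite-finite range inequality says this is impossible for a low-degree polynomial.

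First I would pin down how far out $T'$ lies in units of $\sqrt d$: set $B := T'/\sqrt d = 4cT/\sqrt{nd}$. Plugging in $T = c\sqrt{n\log(1/\delta)}$ gives $B = 4c^2\sqrt{\log(1/\delta)/d}$, and since we are assuming $d \le \log(1/\delta)/(9\log c)$ we get $B \ge 4c^2\sqrt{9\log c} = 12 c^2\sqrt{\log c} > 1$, so Theorem~\ref{thm:homegrown-infinite-finite} applies with this $B$. Now combine the two bounds: since $|T'| = B\sqrt d$ lies outside $[-\sqrt d,\sqrt d]$,
\[
w(T') \le |q(T')| w(T') \le \|q\|_{L_\infty(\R\setminus[-\sqrt d,\sqrt d])} \le (2eB)^d \exp(-B^2 d)\, \|q\|_{L_\infty[-\sqrt d,\sqrt d]} \le (2eB)^d e^{-B^2 d} (n+1)\delta^{0.9}.
\]
Also $w(T') = \pi^{-1/2} e^{-(T')^2} = \pi^{-1/2} e^{-B^2 d}$, and the two factors $e^{-B^2 d}$ cancel. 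Rearranging, we are left with the inequality $\pi^{-1/2} \le (2eB)^d (n+1)\delta^{0.9}$, i.e.
\[
(2eB)^{-d} \le \sqrt\pi\,(n+1)\,\delta^{0.9}.
\]
The remaining work is purely arithmetic bookkeeping: bound $(2eB)^d$ from above using $B = 4c^2\sqrt{\log(1/\delta)/d} \le 4c^2\sqrt{\log(1/\delta)}$ and $d \le \log(1/\delta)/(9\log c)$, so $(2eB)^d \le (8ec^2\sqrt{\log(1/\delta)})^{\log(1/\delta)/9\log c}$; taking logs, $d\log(2eB) \le \frac{\log(1/\delta)}{9\log c}\big(\log(8e) + 2\log c + \tfrac12\log\log(1/\delta)\big)$, which for $c \ge 5$ and $\log(1/\delta)$ not too small is at most (say) $0.4\log(1/\delta)$. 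Then $(2eB)^{-d} \ge \delta^{0.4}$, so the displayed inequality forces $\delta^{0.4} \le \sqrt\pi (n+1)\delta^{0.9}$, i.e. $\delta^{-0.5} \le \sqrt\pi(n+1)$, i.e. $\delta \ge (\sqrt\pi(n+1))^{-2}$. Since $n \ge (12c)^2(3\log(1/\delta))^3$ forces $\delta$ to be polynomially (indeed super-polynomially) smaller than $1/n^2$, this is a contradiction, completing the proof.

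The main obstacle I anticipate is not the structure of the argument — the cancellation of the $e^{-B^2 d}$ factors is the whole point and is clean — but rather making the constants line up so that the final exponent of $\delta$ coming from $(2eB)^{-d}$ beats $0.9$ (and the $(n+1)$ factor) with room to spare, under exactly the stated hypotheses $c \ge 5$ and $n \ge (12c)^2(3\log(1/\delta))^3$. In particular one has to be a little careful that the $\frac12 \log\log(1/\delta)$ term (from the $\sqrt{\log(1/\delta)}$ inside $B$) does not spoil the bound; this is where the assumption that $n$ — and hence $\log(1/\delta)$ — is large enough, together with the slack between $0.9$ and the $0.4$-type bound on $d\log(2eB)/\log(1/\delta)$, gets used. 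I would also double-check the edge case where $B$ is only slightly above $1$ (which happens when $d$ is close to its assumed upper bound), since Theorem~\ref{thm:homegrown-infinite-finite} requires $B > 1$ strictly; the computation $B \ge 12c^2\sqrt{\log c}$ above shows there is in fact a large multiplicative margin, so this is not a real issue.
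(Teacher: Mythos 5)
Your skeleton is exactly the paper's argument: assume $d=k(n,\delta,T)\le \log(1/\delta)/9\log c$, take $q$ from Theorem~\ref{thm:approximation}, evaluate the weighted norm at $T'=B\sqrt{d}$, and cancel the factor $\exp(-B^2d)$ from Theorem~\ref{thm:homegrown-infinite-finite} against $w(T')$, arriving at $\pi^{-1/2}\le (2eB)^d(n+1)\delta^{0.9}$. (Minor slip: the intermediate norm in your display should be over $\R\setminus[-B\sqrt{d},B\sqrt{d}]$, not $\R\setminus[-\sqrt{d},\sqrt{d}]$; since you only use the value at $T'$ this is cosmetic.) The genuine problem is in the ``arithmetic bookkeeping'' you defer. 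By decoupling $B\le 4c^2\sqrt{\log(1/\delta)}$ from $d\le\log(1/\delta)/9\log c$ you acquire the term $\tfrac12\log\log(1/\delta)$ in $\log(2eB)$, which is not controlled by $c$, and your claim that $d\log(2eB)\le 0.4\log(1/\delta)$ ``for $\log(1/\delta)$ not too small'' has the condition backwards: you need $\log(8e)+2\log c+\tfrac12\log\log(1/\delta)\le 3.6\log c$, which for $c=5$ already fails for every $\delta\lesssim 0.7$, and for any fixed $c$ fails once $\delta$ is small enough -- precisely the interesting regime, and nothing in the hypotheses caps $\log(1/\delta)$ in terms of $c$. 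Indeed, for very small $\delta$ your decoupled bound gives an exponent exceeding $0.9$, so $(2eB)^d\delta^{0.9}$ need not even be small. The repair is what the paper does: $(2eB)^d=\left(8ec^2\sqrt{\log(1/\delta)/d}\right)^d$ is increasing in $d$ on the relevant range, hence maximized at $d=\log(1/\delta)/9\log c$, where $B=12c^2\sqrt{\log c}$ depends on $c$ alone; then $(2eB)^d\le\left(24ec^2\sqrt{\log c}\right)^{\log(1/\delta)/9\log c}$, and the exponent of $1/\delta$ is at most about $0.53$ for $c\ge 5$, giving $(2eB)^d\delta^{0.9}\le\delta^{1/3}$.

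Your final step is also incorrect as justified. From $\delta^{-1/2}\le\sqrt{\pi}(n+1)$ you claim a contradiction because ``$n\ge(12c)^2(3\log(1/\delta))^3$ forces $\delta$ to be polynomially smaller than $1/n^2$.'' That hypothesis runs the other way: it is a lower bound on $n$ in terms of $\log(1/\delta)$, so it only yields $\delta\ge\exp(-O_c(n^{1/3}))$ and is perfectly consistent with $\delta$ being a constant; it does not force $\delta<1/n^2$. The concluding contradiction (in the paper as well, which ends with $1/\sqrt{\pi}\le\delta^{1/3}(n+1)$) genuinely relies on being in the regime where $\delta$ is polynomially small in $n$, so that $\delta^{1/3}(n+1)<1/\sqrt{\pi}$; this is the intended parameter regime $\delta=1/\poly(n)$ and must be invoked as such (or added as an explicit hypothesis), not deduced from the assumption $n\ge(12c)^2(3\log(1/\delta))^3$.
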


\begin{proof}
Let $q$ be the polynomial given by Theorem \ref{thm:approximation}. Let $T'=4cT/\sqrt{n}$, $d = \log(1/\delta)/9\log c$, and $B=T'/\sqrt{d}=12c^2 \sqrt{\log c}$. For the sake of contradiction, we suppose that $q$ satisfies the conditions of Theorem \ref{thm:approximation}, but $\deg(q) \le d$. Then
\[\|q\|_{L_\infty(\R \setminus [-B\sqrt{d}, B\sqrt{d}])} = \|q\|_{L_\infty(\R \setminus [-T', T'])} \ge \frac{\exp(-T'^2)}{\sqrt{\pi}}.\]
On the other hand, applying Theorem \ref{thm:homegrown-infinite-finite}, gives
$$\|q\|_{L_\infty(\R \setminus [-B\sqrt{d}, B\sqrt{d}])} \leq (2eB)^d \exp(-T'^2) \delta^{0.9} (n+1).$$
Combining the two inequalities gives
$$\frac{1}{\sqrt{\pi}} \leq (2eB)^d \delta^{0.9} (n+1) \leq \left(24 e c^2 \sqrt{\log(c)}\right)^{\log(1/\delta)/9\log(c)} \delta^{0.9} (n+1) \leq \delta^{1/3} (n+1),$$ which is a contradiction.
\end{proof}

Theorem \ref{thm:deg-lb} yields Theorem \ref{thm:KwiseLowerBound}.


\section{Further Work}
Our negative results naturally suggest a number of directions for future work.

Are halfspaces agnostically learnable under \LOL\ distributions? Our negative result does not even necessarily rule out the use of $L_1$ regression for this task: The polynomial regression algorithm of Kalai et al. \cite{KalaiKlMaSe08} is in fact quite flexible. Nothing is really special about the basis of low-degree monomials, and the algorithm works equally well over any small, efficiently evaluable ``feature space''. That is, if we can show that halfspaces are well-approximated by linear combinations of features from a feature space $\mathcal{F}$ under a distribution $\mathcal{D}$, then we can agnostically learn halfspaces with respect to $\mathcal{D}$ in time proportional to $|\mathcal{F}|$. Could one hope for such approximations? Wimmer \cite{Wimmer10} and Feldman and Kothari \cite{FeldmanKo14} have shown how to use non-polynomial basis functions to obtain faster learning algorithms on the boolean hypercube. On the other hand, recent work of Dachman-Soled et al. \cite{DachmanFeTaWaWi14} shows that, at least for product distributions on the hypercube, polynomials yield the best basis for $L_1$ regression.

Are there other suitable derandomizations of concentration inequalities? In this work, we focused on understanding the limits of $k$-wise independent distributions. Gopalan et al. \cite{GopalanKM14} gave a much more sophisticated generator with nearly optimal seed length. But could simple, natural pseudorandom distributions, such as small-bias spaces, give strong tail bounds themselves?

\section{Acknowledgements}

We thank Varun Kanade, Scott Linderman, Raghu Meka, Jelani Nelson, Justin Thaler, Salil Vadhan, Les Valiant, and several anonymous reviewers for helpful discussions and comments.

\bibliographystyle{alpha}
\bibliography{log_convex}

\appendix
\section{Upper Bound for Limited Independence}

Theorem \ref{thm:KwiseUpperBound} follows from the following well-known \cite{SSS,BRchernoff} lemma, which we prove for completeness. 
\begin{lemma}\label{lem:MomentBound}
Let $X  \in \{\pm 1\}^n$ be uniform and $r \in \mathbb{R}^n$. For all even $k \geq 2$, $$\ex{}{(X \cdot r)^k} \leq \left( e \norm{r}_2^2 k \right)^{k/2}.$$
\end{lemma}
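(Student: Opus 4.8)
The plan is to prove this by a term-by-term comparison with the corresponding moment of a Gaussian, for which the bound is a classical identity. Expanding multilinearly,
$\E[(X \cdot r)^k] = \sum_{a_1 + \cdots + a_n = k} \binom{k}{a_1,\dots,a_n} \prod_i r_i^{a_i}\, \E[X_i^{a_i}]$ (using independence of the $X_i$). Since each $X_i$ is uniform on $\{\pm 1\}$, $\E[X_i^{a_i}]$ equals $1$ when $a_i$ is even and $0$ when $a_i$ is odd; thus only exponent tuples $(a_1,\dots,a_n)$ with every $a_i$ even contribute, and for those $\binom{k}{a_1,\dots,a_n}\prod_i r_i^{a_i} \ge 0$. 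So every surviving term in the expansion is nonnegative.

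Next I would record the Gaussian analogue: let $g = (g_1,\dots,g_n)$ have i.i.d.\ standard normal coordinates. The same expansion gives $\E[(g\cdot r)^k] = \sum_{a_1+\cdots+a_n=k}\binom{k}{a_1,\dots,a_n}\prod_i r_i^{a_i}\prod_i \E[g_i^{a_i}]$, where again only all-even tuples survive, now with $\E[g_i^{a_i}] = (a_i-1)!! \ge 1$. Comparing the two expansions tuple by tuple — the common coefficient $\binom{k}{a_1,\dots,a_n}\prod_i r_i^{a_i}$ is nonnegative, it is multiplied by $\prod_i\E[X_i^{a_i}] \in \{0,1\}$ in one case and by $\prod_i (a_i-1)!!\ge 1$ in the other, and the two vanish on exactly the same tuples — yields $\E[(X\cdot r)^k] \le \E[(g\cdot r)^k]$. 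Then I evaluate the right side: $g\cdot r$ is centered Gaussian with variance $\norm{r}_2^2$, so $\E[(g\cdot r)^k] = \norm{r}_2^k\,\E[Z^k]$ for $Z \sim \mathcal{N}(0,1)$, and $\E[Z^k] = (k-1)!!$ for even $k$. Finally the crude bound $(k-1)!! = 1\cdot 3\cdots(k-1) \le k^{k/2} \le (ek)^{k/2}$ gives $\E[(X\cdot r)^k] \le \norm{r}_2^k (ek)^{k/2} = (e\norm{r}_2^2 k)^{k/2}$; the factor $e$ is in fact not needed, but is harmless.

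There is no genuine obstacle here — the argument is entirely routine. The only point deserving a moment's care is the justification of the term-by-term domination, which is valid precisely because $k$ is even and odd-multiplicity terms vanish, so that the two multinomial expansions are sums of nonnegative terms supported on the same set of exponent tuples. As an alternative that avoids Gaussians entirely, one may write $(X\cdot r)^k = \sum_{(i_1,\dots,i_k)}\prod_j r_{i_j}X_{i_j}$, note that $\E[\prod_j X_{i_j}]$ is $1$ if every index appears an even number of times and $0$ otherwise, observe that the even-multiplicity tuples are exactly those consistent with at least one perfect matching of $\{1,\dots,k\}$, and hence bound $\E[(X\cdot r)^k] \le \sum_{\text{matchings }\pi}\ \sum_{(i_j)\text{ consistent with }\pi}\prod_j r_{i_j} = (k-1)!!\cdot\big(\sum_i r_i^2\big)^{k/2} = (k-1)!!\,\norm{r}_2^k$, the inequality again holding because all terms are nonnegative; one then concludes with the same estimate $(k-1)!! \le (ek)^{k/2}$.
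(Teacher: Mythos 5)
Your proof is correct, and it takes a genuinely different route from the paper's. The paper's self-contained argument goes through the moment generating function: it proves $\E\left[e^{t(X\cdot r)}\right]\le e^{t^2\norm{r}_2^2/2}$, extracts the single moment $\E\left[(X\cdot r)^{k}\right]$ from the exponential series by discarding the other terms (the even-moment terms are nonnegative, and the sign of $t$ is chosen so that the odd-term contribution is nonnegative), and then optimizes $t=\pm\sqrt{k}/\norm{r}_2$, which is where the factor $e$ comes from. You instead compare the multinomial expansion term by term with the Gaussian one (equivalently, in your second variant, count Wick pairings/perfect matchings directly), obtaining the exact-support, nonnegative-term domination $\E\left[(X\cdot r)^k\right]\le (k-1)!!\,\norm{r}_2^k$ and finishing with $(k-1)!!\le k^{k/2}\le (ek)^{k/2}$. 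Your route is slightly sharper — $(k-1)!!\,\norm{r}_2^k\le\left((k-1)\norm{r}_2^2\right)^{k/2}$ recovers, by elementary means, the hypercontractivity-strength bound $\left(\sqrt{k-1}\,\norm{r}_2\right)^k$ that the paper only cites as a "stronger form" — and it avoids the one delicate step in the MGF proof (arranging the sign of $t$ so the odd terms can be dropped). What the MGF route buys in exchange is brevity and portability: it extends with no change to subgaussian or merely bounded coordinates, where the clean parity/matching structure you exploit is no longer available. The only points in your write-up that need the care you already give them are that all surviving exponent tuples are all-even (so the shared coefficients $\binom{k}{a_1,\dots,a_n}\prod_i r_i^{a_i}$ are nonnegative and the Gaussian multipliers $(a_i-1)!!\ge 1$, with the convention $(-1)!!=1$ for $a_i=0$), and, in the matching variant, that each even-multiplicity index tuple is consistent with at least one matching and all terms are nonnegative, so overcounting only helps; both are handled correctly.
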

An even stronger form of Lemma \ref{lem:MomentBound} follows immediately from the hypercontractivity theorem \cite{Bonami70} \cite[\S 9]{O'Donnell14}: Letting $f(x) = x \cdot r$, we have $$\ex{}{(X \cdot r)^k}  = \norm{f}_k^k \leq \left((k-1)^{\deg(f)/2} \norm{f}_2\right)^k =  \left(\sqrt{k-1} \norm{r}_2\right)^k,$$ as required. A self-contained proof follows.
\begin{proof}
We start by bounding the moment generating function of $X \cdot r$:
Let $t \in \mathbb{R}$ be fixed later. For any $i \in [n]$, we have $$\ex{}{e^{t r_i X_i}} = \frac12 \left( e^{t r_i} + e^{-t r_i} \right) = \sum_{k=0}^\infty \frac{(t r_i)^k + (-t r_i)^k}{2 k!} = \sum_{k=0}^\infty \frac{(t r_i)^{2k}}{(2k)!} \leq \sum_{k=0}^\infty \frac{(t^2 r_i^2)^{k}}{2^k k!} = e^{t^2 r_i^2 /2}.$$
By independence, $$\ex{}{e^{t (X \cdot r)}} = \prod_{i=1}^n \ex{}{e^{t r_i X_i}} \leq \prod_{i=1}^n e^{t^2 r_i^2/2} = e^{t^2 \norm{r}_2^2/2}.$$
Now we have $$\ex{}{e^{t (X \cdot r)}} = \sum_{k=0}^\infty \frac{t^k}{k!} \ex{}{(X \cdot r)^k} \leq e^{t^2 \norm{r}_2^2/2}.$$
We wish to bound a single moment, namely $\ex{}{(X \cdot r)^{k_*}}$ for an even $k_*$. We do this by picking one term out of the above infinite sum.
We have  $\ex{}{(X \cdot r)^k} \geq 0$ for even $k$, so these terms can be removed from the sum without increasing it. By changing the sign of $t$, we can ensure that the sum of the odd terms is positive and thus $$\frac{t^{k_*}}{k_*!} \ex{}{(X \cdot r)^{k_*}} \leq \sum_{k=0}^\infty \frac{t^k}{k!} \ex{}{(X \cdot r)^k}  = \ex{}{e^{t (X \cdot r)}} \leq e^{t^2 \norm{r}_2^2 /2}.$$
%
Rearranging and setting $t = \pm \sqrt{k_*}/\norm{r}_2$, we obtain $$ \ex{}{(X \cdot r)^{k_*}} \leq \frac{k_*!}{t^{k_*}} e^{t^2 \norm{r}_2^2 /2} = \frac{k_*! \norm{r}_2^{k_*} e^{k_*/2}}{\sqrt{k_*}^{k_*}} \leq \left( \frac{k_*^2 \norm{r}_2^2 e}{k_*} \right)^{k_*/2} = (e \norm{r}_2^2 k_*)^{k_*/2},$$ as required.
\end{proof}
Now we can prove the upper bound for $k$-wise independence using the connection between moment bounds and tail bounds \cite{SSS}.
\begin{proof}[Proof of Theorem \ref{thm:KwiseUpperBound}]
Note that, if $X \in \{\pm 1\}^n$ is $k$-wise independent, then $$\ex{}{(X \cdot r)^k} = \sum_{i_1 \cdots i_k \in [n]} \left( \prod_{j=1}^k r_{i_j} \right) \cdot \ex{}{\prod_{j=1}^k X_{i_j}}$$ is the same as for uniform $X$, as this is the expectation of a degree-$k$ polynomial.
By Lemma \ref{lem:MomentBound} and Markov's inequality, we have (assuming $k$ is even), $$\pr{}{|X \cdot r| \geq T} = \pr{}{(X \cdot r)^k \geq T^k} \leq \frac{\ex{}{(X \cdot r)^k}}{T^k} \leq \left( \frac{e \norm{r}_2^2 k}{T^2} \right)^{k/2}.$$
Substituting $k = 2 \lceil\eta  \log_e(1/\delta) \rceil$ and $T=  e^{(\eta+1)/2\eta} \sqrt{k} \norm{r}_2$, we have $$\pr{}{|X \cdot r| \geq  T} \leq  \left( \frac{e \norm{r}_2^2 k}{(e^{(\eta+1)/2\eta} \sqrt{k} \norm{r}_2)^2} \right)^{\lceil \eta  \log_e(1/\delta) \rceil} = e^{-\lceil \eta \log_e(1/\delta) \rceil / \eta} \leq \delta.$$
\end{proof}

\end{document}